\newtheorem{theorem}{Theorem}
\newtheorem{lemma}{Lemma}
\theoremstyle{definition}
\newtheorem{definition}{Definition}
\newcommand{\R}{\mathbb{R}}
\newcommand{\C}{\mathbb{C}}
\newcommand{\dd}{\mathrm{d}}
\newcommand{\bra}[1]{\left< #1 \right|}
\newcommand{\ket}[1]{\left| #1 \right>}
\newcommand{\iprod}[2]{\langle #1 | #2 \rangle}
\newcommand{\oprod}[2]{| #1 \rangle\langle #2 |}
\begin{document}

\title{A variant of the Kochen-Specker theorem localising value indefiniteness}
	
\author{Alastair A. Abbott}
\email{a.abbott@auckland.ac.nz}
\homepage{http://www.cs.auckland.ac.nz/~aabb009}

\affiliation{Department of Computer Science, University of Auckland,
Private Bag 92019, Auckland, New Zealand}
\affiliation{Centre Cavaill\`es, \'Ecole Normale Sup\'erieure, 29 rue d'Ulm, 75005 Paris, France}

\author{Cristian S. Calude}
\email{cristian@cs.auckland.ac.nz}
\homepage{http://www.cs.auckland.ac.nz/~cristian}

\affiliation{Department of Computer Science, University of Auckland,
Private Bag 92019, Auckland, New Zealand}

\author{Karl Svozil}
\email{svozil@tuwien.ac.at}
\homepage{http://tph.tuwien.ac.at/~svozil}

\affiliation{Institute for Theoretical Physics,
Vienna  University of Technology,
Wiedner Hauptstrasse 8-10/136,
1040 Vienna,  Austria}

\affiliation{Department of Computer Science, University of Auckland,
Private Bag 92019, Auckland, New Zealand}

\date{\today}

\begin{abstract}
The Kochen-Specker theorem proves the inability to assign, simultaneously, noncontextual definite values to all (of a finite set of) quantum mechanical observables in a consistent manner.
If one assumes that any definite values behave noncontextually, one can nonetheless only conclude that \emph{some} observables (in this set) are value indefinite. 


In this paper we prove a variant of the Kochen-Specker theorem showing that, under the same assumption of noncontextuality, if a single one-dimensional projection observable is assigned the definite value 1, then no one-dimensional projection observable that is incompatible (i.e., non-commuting) with this one can be assigned consistently a definite value.
Unlike standard proofs of the Kochen-Specker theorem, in order to localise and show the extent of value indefiniteness this result requires a constructive method of reduction between Kochen-Specker sets.

If a system is prepared in a pure state $\ket{\psi}$, then it is reasonable to assume that any value assignment (i.e., hidden variable model) for this system assigns the value 1 to the observable projecting onto the one-dimensional linear subspace spanned by $\ket{\psi}$, and the value 0 to those projecting onto linear subspaces orthogonal to it.
Our result can be interpreted, under this assumption, as showing that the outcome of a measurement of \emph{any} other incompatible one-dimensional projection observable cannot be determined in advance, thus formalising a notion of quantum randomness.


\end{abstract}

\keywords{Kochen-Specker theorem, quantum value indefiniteness, quantum indeterminism, quantum randomness}

\maketitle

\section{The Kochen-Specker theorem and value indefiniteness}

Bell's theorem~\cite{Bell:1966uq} and the Kochen-Specker theorem~\cite{Kochen:1967fk} are perhaps two of the results which have been most influential in developing the modern understanding of quantum mechanics as an irreducibly non-classical theory~\cite{Mermin:1993qy,Pitowsky:1998aa}.
Moreover, these two no-go theorems are seen as the strongest argument for quantum mechanics being a fundamentally indeterministic theory, rather than one ruled by a deeper determinism below the level of the quantum mechanical description of reality.

Bell's theorem, which shows that quantum mechanics predicts statistical correlations between separated particles greater than what would be possible in any local, realistic, classical theory, was the focus of attention for several decades due to its relatively clear ability to be tested experimentally~\cite{Aspect:1982dp}.
The Kochen-Specker theorem was proved very shortly afterwards,
but was largely ignored due to a perceived lack of testability,
and perhaps also its formalisation in terms of partial algebras,
until it attracted renewed attention with the more recent advances
in quantum information theory and foundations.
In contrast to the bounds on probability distributions given by Bell's theorem, the Kochen-Specker theorem shows that the Hilbert-space structure of quantum mechanics makes it  impossible to assign `classical' definite values to all quantum observables in a consistent manner.
Since such a definite value is precisely a (deterministic) hidden variable specifying, in advance, the result of a measurement of an observable, this means that the outcomes of all quantum measurements on a system cannot be simultaneously predetermined.
More recent developments have significantly reduced the size and difficulty of proofs of the Kochen-Specker theorem~\cite{Cabello:1996zr} and converted such proofs into testable inequalities~\cite{Cabello:2008hc}.

However, in showing the impossibility of a classical deterministic `two-valued' measure (i.e., value assignment)
the Kochen-Specker theorem leaves open several possible conclusions.
The Kochen-Specker theorem, more specifically, finds a contradiction between the following three assumptions, which will be formalised more rigorously a little later:
\begin{itemize}
\item[(i)] all observables are assigned a definite value (i.e., are `value definite');
\item[(ii)] this definite value should be noncontextual -- that is, assigned as a function of the observable alone, and not depending on other compatible observables;
\item[(iii)] the definite values for a set of compatible observables must be consistent with the theoretical quantum predictions for the relations between them.
\end{itemize}
Condition (iii) is largely uncontroversial and hence one must generally conclude that either (or even both) (i) and (ii) must be given up.
Several alternative interpretations of quantum mechanics are contextual (e.g.,~\cite{Bohm:1952aa}), and hence discard (ii).
Perhaps the more popular interpretation, however, is that 
the inability to simultaneously assign noncontextual definite values, representing predetermined measurement outcomes, to all observables means that measurement outcomes are not determined in advance at all: that quantum mechanics represents a value indefinite reality.
This interpretation is often referred to simply as `contextuality' in the literature; however we reserve this term strictly for the contextual behaviour of definite values.

If we choose to require (ii) to hold, at least for any observables that are assigned definite values, 
then there remains an oft-overlooked gap between the formal result of the Kochen-Specker theorem and the general interpretation of it.
Indeed, the negation of (i) is that \emph{not all} observables are assigned definite values: it does not prove that no observable can be assigned a definite value and hence, given that definite values represent predetermined measurement outcomes, does not show
that all measurements must result in the \emph{ex nihilo} creation of an outcome, nor does it allow one to know \emph{which} observables in any set are value indefinite.
We can, of course, postulate that if some observables are value indefinite, then this should,
by symmetry or uniformity considerations, be the case for all observables (or at least those for which the Born rule assigns a probability strictly between 0 and 1 to some outcomes).
However, it is key to realise that this is not in any sense a formal consequence of the Kochen-Specker theorem, and constitutes an additional, undesired, assumption.

In this paper we address precisely this issue.
As is common in modern treatments of the Kochen-Specker theorem~\cite{Cabello:1994ly,Cabello:1996zr,Peres:1991ys} we focus on one-dimensional (rank-1) projection observables, and we denote the observable projecting onto the linear subspace spanned by a vector $\ket{\psi}$ as $P_\psi=\frac{\oprod{\psi}{\psi}}{|\iprod{\psi}{\psi}|}$.
By using a modified, weakened set of assumptions, we prove that if one such projection observable $P_\psi$ is assigned the value 1, then no other such projection observable $P_\phi$ can be consistently assigned a definite value unless $P_\psi$ and $P_\phi$ commute.
In interpreting this result physically, we note that if a system is prepared in the state $\ket{\psi}$ then the outcome of a measurement of the observable $P_\psi$ 
is known to be 1 with certainty, and
thus any value assignment representing the outcomes of possible measurements on the system should assign the value 1 to $P_\psi$.
If $P_\phi$ does not commute with $P_\psi$ it is therefore value indefinite under such a value assignment and hence cannot have a consistently predetermined measurement outcome.

This self-contained, analytic proof extends and generalises the results of~\cite{Abbott:2012fk,Abbott:2013ly}.

Throughout the paper we will assume (ii) to hold, as is common in interpretations of the Kochen-Specker theorem, and our strengthened results and interpretation of the Kochen-Specker theorem thus rely on this condition.
We do not attempt to justify this assumption here, as this is an interpretational choice and the subject of much debate (see~\cite[Chap. 4]{Ronde:2011aa} for an overview), which is beyond the scope of this paper.

\subsection{Definitions}
\label{defs}
As usual we denote by $\C$ the set of complex numbers and use the standard quantum mechanical bra-ket notion; that is, we denote (unit) vectors in the Hilbert space $\C^n$ by $\ket{\cdot}$.
As mentioned above, we will focus on one-dimensional projection observables and denote by $P_\psi$ the operator projecting onto the linear subspace spanned by $\ket{\psi}$; that is, $P_\psi=\frac{\ket{\psi}\bra{\psi}}{|\iprod{\psi}{\psi}|}$.

In the following we formalise hidden variables and the notion of value definiteness in a clear and unambiguous fashion.
This framework is based on that we developed in~\cite{Abbott:2012fk}, and similar to standard approaches to the Kochen-Specker theorem~\cite{Cabello:1994ly};
we have made several simplifications since we do not wish to explore contextual definite values or hidden variable theories in any detail here.

We fix a positive integer $n\ge 2$.
Let $\mathcal{O} \subseteq \{ P_\psi \mid \ket{\psi} \in \mathbb{C}^n \}$ be a nonempty set of one-dimensional projection observables on the Hilbert space $\mathbb{C}^n$~\cite{Halmos:1974qv}.
\begin{definition}
	A set $C\subset \mathcal{O}$ is a \emph{context of $\mathcal{O}$} if $C$ has $n$ elements (i.e., $|C|=n$) and for all $P_\psi,P_\phi\in C$ with $P_\psi \neq P_\phi$, $\iprod{\psi}{\phi}=0$.
\end{definition}
Since distinct one-dimensional projection observables commute if and only if they project onto mutually orthogonal linear subspaces, a context $C$ of $\mathcal{O}$ is thus a maximal set of compatible one-dimensional projection observables on $\C^n$.
Because there is a direct correspondence (up to a phase-shift) between unit vectors and one-dimensional projection observables, a context is uniquely defined by an orthonormal basis of $\C^n$.

Recall that a partial function is one which may be undefined for some values.
If it is defined everywhere, then it is total.
\begin{definition}\label{def:v}
	A \emph{value assignment function (on $\mathcal{O}$)} is a partial two-valued function $v: \mathcal{O} \to \{0,1\}$, assigning values to some (possibly all) observables in $\mathcal{O}$.
\end{definition}
We note that we could, as in~\cite{Abbott:2012fk}, allow $v$ to be a function of both the observable $P$ and the context $C$ containing $P$, allowing values to be assigned contextually.
It would perhaps be more correct to call $v$, as defined above in Definition~\ref{def:v}, a \emph{noncontextual} value assignment function; however, since we are interested only in the noncontextual case, we avoid this for compactness.

\begin{definition}
	An observable $P\in\mathcal{O}$ is \emph{value definite (under $v$)} if $v(P)$ is defined; otherwise it is \emph{value indefinite (under $v$)}.
	Similarly, we call $\mathcal{O}$ value definite (under $v$) if every observable $P\in\mathcal{O}$ is value definite.
\end{definition}

\subsection{The Kochen-Specker theorem}

With this terminology, we can state the Kochen-Specker theorem formally.
We present it in the following form deliberately in order to draw the comparison to our earlier informal description, and to clarify the following discussion, even though the second condition is redundant because we require, by definition, that a value assignment function be noncontextual.

\begin{theorem}[Kochen-Specker, \cite{Kochen:1967fk}]
	Let $n\ge 3$.
	Then there exists a (finite) set of one-dimensional projection observables $\mathcal{O}$ on the Hilbert space $\C^n$ such that there is no value assignment function $v$ satisfying the following three conditions:
	\begin{itemize}
		\item[(i)] $\mathcal{O}$ is value definite under $v$; that is, $v$ is a total function.
		\item[(ii)] The value $v(P)$ of an observable $P\in\mathcal{O}$ depends only on $P$ and not the context containing $P$.
		\item[(iii)] For every context $C$ of $\mathcal{O}$ the following condition holds%
		\footnote{This condition means that $v$ is a \emph{Boolean frame function} with weight 1~\cite{Gleason:1957ty}.}:
		$\sum_{P\in C}v(P)=1$.
	\end{itemize}
\end{theorem}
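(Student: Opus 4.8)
The plan is to recast conditions (i)--(iii) as a combinatorial colouring problem and then exhibit, for each dimension, a finite configuration of directions admitting no such colouring. Since $P_\psi=P_\phi$ exactly when $\ket\psi$ and $\ket\phi$ span the same line, and since a context is precisely an orthonormal basis of $\C^n$, a value assignment satisfying (i) and (iii) is the same thing as a total function assigning $0$ or $1$ to each line so that every orthonormal basis carries exactly one $1$. Two consequences will be used throughout: if $v(P_\psi)=1$ then $v(P_\phi)=0$ for every $\ket\phi$ orthogonal to $\ket\psi$ (complete $\{\ket\psi,\ket\phi\}$ to a basis and apply (iii)), and every basis must contain at least one $1$. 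I shall call a colouring with these properties \emph{admissible}, so the theorem asserts the existence of a finite set on which no admissible colouring exists.

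It suffices to produce such a finite set for each $n\ge 3$, and the essential case is $n=3$. One is tempted to obtain the higher-dimensional sets by embedding a three-dimensional construction in the span of the first three coordinates and completing each orthonormal triple to an $n$-element context with the fixed vectors $\ket{e_4},\dots,\ket{e_n}$; however this naive embedding fails, since the colouring sending every three-dimensional observable to $0$ and one appended observable to $1$ already satisfies (iii) on every such context. A correct lifting must therefore include directions that couple the appended vectors to the three-dimensional part (for instance superpositions $\tfrac{1}{\sqrt 2}(\ket a+\ket{e_j})$), forcing the appended observables to value $0$; equivalently one exhibits the finite non-colourable sets directly in each dimension, as is classically known. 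I would treat this dimensional bookkeeping as routine once the three-dimensional obstruction is in hand.

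The core of the argument is thus the three-dimensional construction and the proof that it admits no admissible colouring; this is where the Hilbert-space geometry enters and is the main obstacle. Working with real directions in $\R^3\subset\C^3$, one fixes an explicit finite family of unit vectors closed under enough orthogonality relations (the original Kochen--Specker construction uses $117$ directions; much smaller families are now known). Assuming an admissible colouring, one selects a direction coloured $1$ and propagates the forced $0$'s along the orthogonality relations of the family. The geometric heart is a bounded ``chaining'' gadget: through a fixed finite sequence of orthogonality steps, realizable in $\R^3$, a pair of directions at a small angle (with $|\iprod{a}{b}|$ above a suitable threshold) is forced to receive the same value. Composing finitely many such gadgets propagates the value $1$ from its source to a direction orthogonal to it; since orthogonal directions cannot both carry $1$, this is the desired contradiction.

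I expect the chaining gadget, together with the verification that the chosen finite family actually realises all the orthogonality relations needed to close the contradiction, to be the delicate part. Unlike the dimensional reduction, there is no purely formal shortcut here: one must either invoke the explicitly tabulated Kochen--Specker directions or carry out the geometric bookkeeping for a smaller modern set by hand. The role of the gadget is precisely to replace the continuity argument (a value-$1$ set cannot be both open and compatible with orthogonality) by a finite combinatorial certificate, so that the impossibility is witnessed by a finite, explicitly exhibited $\mathcal{O}$.
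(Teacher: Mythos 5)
The paper does not actually prove this statement: it is quoted as the classical Kochen--Specker theorem with a citation to the original 1967 paper, and the only proof-like material nearby is the illustrative discussion of the Cabello 18-observable set in $\C^4$ (Fig.~1). So there is no in-paper proof to match yours against; judged on its own terms, your proposal is a correct outline of the \emph{original} Kochen--Specker argument. Your reformulation of (i)--(iii) as a $\{0,1\}$-colouring of lines with exactly one $1$ per orthonormal basis is exactly right, as are the two derived rules (orthogonality to a $1$ forces a $0$; every basis contains a $1$), and your description of the $\R^3$ core --- a finite orthogonality gadget forcing nearby directions to share the value $1$, chained finitely many times until it collides with an orthogonal direction --- is faithful to Kochen and Specker's $\Gamma_1$/$\Gamma_2$ construction. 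Two remarks. First, what you call the delicate part really is all of the content: without exhibiting the gadget and verifying its angle threshold, the argument is a citation rather than a proof, which is fine here since the paper itself only cites the result. Second, your observation that the naive lift to $n>3$ fails (colour the appended basis vector $1$ and the embedded set $0$) is correct and often glossed over; the standard repair is not quite the superposition vectors you suggest but rather to place copies of the three-dimensional set in enough coordinate $3$-subspaces that, whichever standard basis vector receives the value $1$, some copy lies in a subspace containing it and is therefore forced to be admissibly coloured --- a contradiction. By contrast, the route the paper gestures at (and the one most modern treatments take) is the parity argument on the Cabello set: each of the 18 observables lies in exactly two of the nine contexts, so $\sum_{C}\sum_{P\in C}v(P)$ is simultaneously $9$ and twice an integer. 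That proof is far shorter but is tied to $n=4$ and to a specific set, whereas your chaining approach works natively in $n=3$ and makes visible the geometric mechanism (propagation of definite values along small rotations) that the paper's own Lemmata 2--4 later exploit for the localised variant.
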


The third condition expresses the fact that only one projection observable in a context can be assigned the value 1.
As we mentioned earlier, this is largely uncontroversial: one can simultaneously measure the observables in a context and quantum mechanics predicts precisely that exactly one of these measurements should give the result `1', thus any corresponding definite values assigned to these observables should obey this same condition.
Hence, if we  assume (ii) to be true -- at least for observables that are value definite for which the statement makes clear sense -- then the Kochen-Specker theorem requires us to conclude the negation of (i): that $\mathcal{O}$ cannot be value definite, and hence \emph{at least one} observable must be value indefinite.

Note that the third condition is not independent of the first: it is not clear how the sum $\sum_{P\in C}v(P)$ should be evaluated if $v(P)$ is undefined.
This is one of the key issues we will clarify in attempting to localise value indefiniteness.

\section{A path to localising value indefiniteness}

While the Kochen-Specker theorem certainly succeeds, as was the original intention, in showing that quantum mechanics must obey an entirely non-classical event structure, it does not, as we have pointed out, show that all observables must be value indefinite and their outcomes intrinsically indeterministic.
As a consequence of the global nature of the hypothesis of the theorem -- that \emph{all} observables are value definite -- one can only draw a global conclusion: that \emph{not all} observables are value definite.
That is, the theorem, even under the assumption of noncontextuality, cannot `locate' value indefiniteness to any particular observable.
This is an important point, not only for the foundational understanding of quantum mechanics, but also in practical applications:
quantum random number generators and cryptographic schemes rely on the indeterminism of quantum mechanics providing `irreducible randomness'~\cite{Fiorentino:2007dq}.
To certify such claims, it is important to be able to localise value indefiniteness to ensure it applies to the observables measured in such applications.

We proceed by providing more nuanced and less demanding, localised versions of the Kochen-Specker assumptions, and use these to localise value indefiniteness (always under the assumption of noncontextuality for value definite observables).

\subsection{Localising the hypotheses}

Our approach is a conservative one: rather than assuming complete value definiteness of the entire set of observables considered, we require observables to be value definite only when their indefiniteness would allow the possibility of measurements\footnote{If an observable is value indefinite, this must surely imply that both outcomes are \emph{possibilities}.} violating the quantum predictions specified in condition (iii) of the Kochen-Specker theorem (see the more detailed discussion and example below).

In order for this approach to work, we need, as a premise, at least one observable to be value definite.
We then show that the assumption that \emph{any other} observable is value definite leads to a contradiction.

Fortunately, there is a justification for this premise: if a system is prepared in an arbitrary state $\ket{\psi}\in\C^n$, then measurement of the observable $P_\psi$ should yield the outcome 1. Thus, it seems reasonable to require that, if $P_\psi\in\mathcal{O}$, then $v(P_\psi)=1$.
We call this the \emph{eigenstate assumption}~\cite{Abbott:2012fk}, which is similar to, although weaker than, the `eigenstate-eigenvalue link' discussed in~\cite{Suarez:2004gn}.
Furthermore, since the critical feature of a set $\mathcal{O}$ of projection observables is the orthogonality relations between these observables rather than the specific form of these observables, we can hence choose our basis at will.
It is thus not unreasonable to consider that some one-dimensional projection observable in $\mathcal{O}$ has the value 1, and to fix the basis used to express $\mathcal{O}$ to that of the state $\ket{\psi}$ to make this observable coincide with $P_\psi$.

Let us finally discuss how assumption (iii) can be generalised for partial value assignment functions $v$, that is, the case where some observables in $\mathcal{O}$ may be value indefinite.
\begin{definition}[Admissibility]\label{admis-rules}
	Let $\mathcal{O}$ be a set of one-dimensional projection observables on $\C^n$ and let $v:\mathcal{O}\to\{0,1\}$ be a value assignment function.
	Then $v$ is \emph{admissible} if the following two conditions hold for every context $C$ of $\mathcal{O}$:
	\begin{itemize}
		\item[(a)] if there exists a $P\in C$ with $v(P)=1$, then $v(P')=0$ for all $P'\in C\setminus\{P\}$;
		\item[(b)] if there exists a $P\in C$ with $v(P')=0$ for all $P'\in C\setminus\{P\}$, then $v(P)=1$.
	\end{itemize}
\end{definition}
Admissibility requires that the quantum predictions of (iii) are never violated, while allowing the value indefiniteness of an observable $P$ if both outcomes (0 and 1) of a measurement of $P$ would be compatible with the definite values of other observables sharing a context with $P$.
For example, if $v(P)=1$, then a measurement of all the observables in a context $C$ containing $P$ must yield the outcome 1 for $P$, and hence to avoid contradiction the outcome 0 for the other observables in the context.
On the other hand, if $v(P)=0$, even though measurement of $P$ must yield the outcome 0, any of the other observables in $C$ could yield the value 1 or 0 (as long as only one yields 1), hence we should not conclude the value definiteness of these other observables.

\subsubsection{An illustrated example}
\label{sec:CabelloExample}

Let us illustrate the difference between our weakened assumptions, and in particular admissibility, with the hypotheses of the Kochen-Specker theorem.

\begin{figure}
	\begin{center}
	\begin{tabular}{ccc}
		\if01
		\begin{tikzpicture}  [scale=0.6]

		\tikzstyle{every path}=[line width=1pt]
		\tikzstyle{c1}=[circle,minimum size=6]
		\tikzstyle{s1}=[rectangle,minimum size=9]
		\tikzstyle{l1}=[draw=none,circle,minimum size=35]
		\tikzstyle{l2}=[draw=none,circle,minimum size=12]

		\path (0:5) coordinate(1)
			  (4.167,1.44) coordinate(2)
			  (3.33,2.88) coordinate(3)
		      (60:5) coordinate(4)
			  (0.833,4.33) coordinate(5)
			  (-0.833,4.33) coordinate(6)
			  (120:5) coordinate(7)
			  (-3.33,2.88) coordinate(8)
			  (-4.167,1.44) coordinate(9)
			  (180:5) coordinate(10)
			  (-4.167,-1.44) coordinate(11)
			  (-3.33,-2.88) coordinate(12)
			  (240:5) coordinate(13)
			  (-0.833,-4.33) coordinate(14)
			  (0.833,-4.33) coordinate(15)
			  (300:5) coordinate(16)
			  (3.33,-2.88) coordinate(17)
			  (4.167,-1.44) coordinate(18);
	
		\draw (1) -- (2) -- (3) -- (4) -- (5) -- (6) -- (7) -- (8) -- (9) -- (10) -- (11) -- (12) -- (13) -- (14) -- (15) -- (16) -- (17) -- (18) -- (1);

		\draw (2) -- (9);
		\draw (11) -- (18);
		\draw (3) -- (14);
		\draw (5) -- (12);
		\draw (6) -- (17);
		\draw (8) -- (15);

		\draw (2) arc (450:270:2 and 1.44);
		\draw[rotate=240] (3) arc (90:270:2 and 1.44);
		\draw[rotate=300] (6) arc (90:270:2 and 1.44);
		\draw (9) arc (90:270:2 and 1.44);
		\draw[rotate=60] (12) arc (90:270:2 and 1.44);
		\draw[rotate=120] (15) arc (90:270:2 and 1.44);

		\draw (1) coordinate[s1,fill,label=0:$P_b$];
		\draw (2) coordinate[c1,fill];
		\draw (3) coordinate[c1,fill];
		\draw (4) coordinate[c1,fill];
		\draw (5) coordinate[c1,fill];
		\draw (6) coordinate[c1,fill,label=90:$P_f$];
		\draw (7) coordinate[s1,fill,label=90:$P_d$];
		\draw (8) coordinate[c1,fill];
		\draw (9) coordinate[c1,fill=white,label=150:$P_c$];
		\draw (9) node[c1,fill=none] {\Huge $\times$};
		\draw (9) coordinate[c1,draw];
		\draw (10) coordinate[c1,fill];
		\draw (11) coordinate[c1,fill];
		\draw (12) coordinate[s1,fill,label=180:$P_a$];
		\draw (13) coordinate[c1,fill];
		\draw (14) coordinate[c1,fill];
		\draw (15) coordinate[s1,fill,label=270:$P_e$];
		\draw (16) coordinate[c1,fill];
		\draw (17) coordinate[c1,fill];
		\draw (18) coordinate[c1,fill];

		\coordinate[l1,label=260:$C_1$] (c1) at (10);
		\coordinate[l2,label=25:$C_2$] (c2) at (10);

		\end{tikzpicture}
		\fi
		\includegraphics{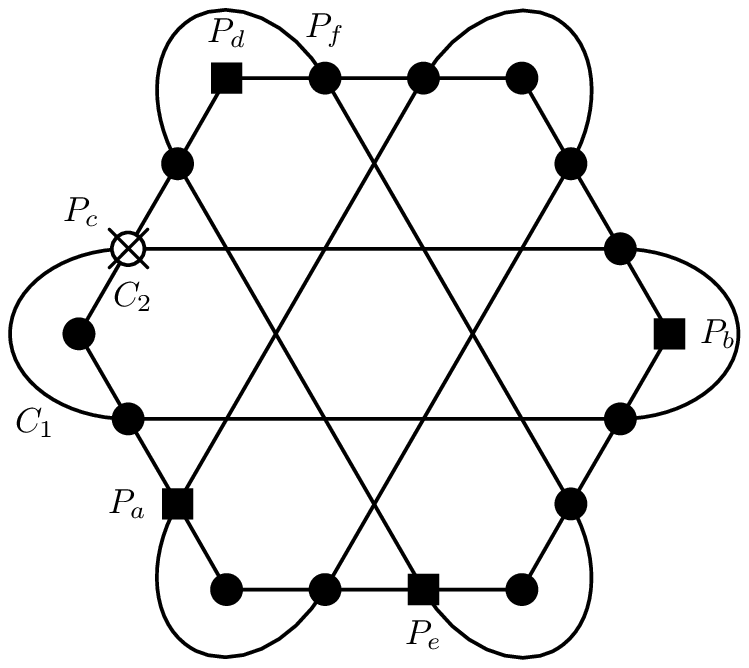}
		
		& \qquad\qquad\qquad &
		
		\if01
		\begin{tikzpicture}  [scale=0.6]

		\tikzstyle{every path}=[line width=1pt]
		\tikzstyle{c1}=[circle,minimum size=6]
		\tikzstyle{s1}=[rectangle,minimum size=9]
		\tikzstyle{l1}=[draw=none,circle,minimum size=35]
		\tikzstyle{l2}=[draw=none,circle,minimum size=12]

		\path (0:5) coordinate(1)
			  (4.167,1.44) coordinate(2)
			  (3.33,2.88) coordinate(3)
		      (60:5) coordinate(4)
			  (0.833,4.33) coordinate(5)
			  (-0.833,4.33) coordinate(6)
			  (120:5) coordinate(7)
			  (-3.33,2.88) coordinate(8)
			  (-4.167,1.44) coordinate(9)
			  (180:5) coordinate(10)
			  (-4.167,-1.44) coordinate(11)
			  (-3.33,-2.88) coordinate(12)
			  (240:5) coordinate(13)
			  (-0.833,-4.33) coordinate(14)
			  (0.833,-4.33) coordinate(15)
			  (300:5) coordinate(16)
			  (3.33,-2.88) coordinate(17)
			  (4.167,-1.44) coordinate(18);
	
		\draw (1) -- (2) -- (3) -- (4) -- (5) -- (6) -- (7) -- (8) -- (9) -- (10) -- (11) -- (12) -- (13) -- (14) -- (15) -- (16) -- (17) -- (18) -- (1);

		\draw (2) -- (9);
		\draw (11) -- (18);
		\draw (3) -- (14);
		\draw (5) -- (12);
		\draw (6) -- (17);
		\draw (8) -- (15);

		\draw (2) arc (450:270:2 and 1.44);
		\draw[rotate=240] (3) arc (90:270:2 and 1.44);
		\draw[rotate=300] (6) arc (90:270:2 and 1.44);
		\draw (9) arc (90:270:2 and 1.44);
		\draw[rotate=60] (12) arc (90:270:2 and 1.44);
		\draw[rotate=120] (15) arc (90:270:2 and 1.44);

		\draw (1) coordinate[c1,fill,label=0:$P_b$];
		\draw (2) coordinate[c1,fill=white];
		\draw (2) coordinate[c1,draw];
		\draw (3) coordinate[c1,fill];
		\draw (4) coordinate[c1,fill=white];
		\draw (4) coordinate[c1,draw];
		\draw (5) coordinate[c1,fill];
		\draw (6) coordinate[c1,fill=white];
		\draw (6) coordinate[c1,draw];
		\draw (7) coordinate[c1,fill=white];
		\draw (7) coordinate[c1,draw];
		\draw (8) coordinate[c1,fill=white];
		\draw (8) coordinate[c1,draw];
		\draw (9) coordinate[c1,fill=white,label=150:$P_c$];
		\draw (9) coordinate[c1,draw];
		\draw (10) coordinate[c1,fill];
		\draw (11) coordinate[c1,fill];
		\draw (12) coordinate[s1,fill,label=180:$P_a$];
		\draw (13) coordinate[c1,fill];
		\draw (14) coordinate[c1,fill];
		\draw (15) coordinate[c1,fill=white];
		\draw (15) coordinate[c1,draw];
		\draw (16) coordinate[c1,fill=white];
		\draw (16) coordinate[c1,draw];
		\draw (17) coordinate[c1,fill=white];
		\draw (17) coordinate[c1,draw];
		\draw (18) coordinate[c1,fill=white];
		\draw (18) coordinate[c1,draw];


		\end{tikzpicture}
		\fi
		\includegraphics{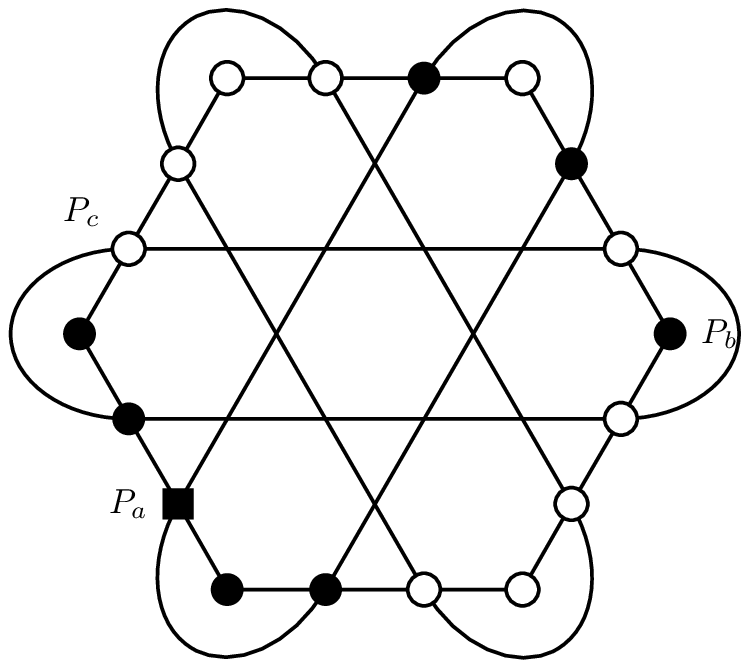}
		\\
		(a) & & (b)\\
	\end{tabular}
	\end{center}
	\caption{Greechie orthogonality diagram of a proof of the Kochen-Specker theorem~\cite{Cabello:1996zr}.
			The value of of $v$ of each observable (node) $P$ is represented as follows: $v(P)=1$ -- black square; $v(P)=0$ -- filled circle; $v(P)$ undefined (value indefinite) -- hollow circle.
			(a) The contradiction arising when $v(P_a)=v(P_b)=1$: $v$ cannot be admissible, since this would require that $v(P_c)=0$ and $v(P_c)=1$ simultaneously, as shown by the cross in the diagram.
			(b) A possible admissible value assignment when $v(P_a)=1$ and $v(P_b)=0$.}
	\label{fig:Cabello}
\end{figure}

Consider the Greechie orthogonality digram shown in Fig.~\ref{fig:Cabello}, in which vertices depict observables and smooth lines or curves represent contexts.
This well known diagram represents the `orthogonality' relations between the observables used in a well known proof of the Kochen-Specker theorem due to Cabello et al.~\cite{Cabello:1996zr}, containing only 18 one-dimensional projection observables on $\C^4$.

The Kochen-Specker theorem implies that there is no way to assign every observable in this diagram a value such that the admissibility requirements hold: exactly one observable in each context should have the value 1.

Let us suppose for the sake of example that $v(P_a)=v(P_b)=1$ and that $v$ is admissible.
Then, by working from $P_a$ and $P_b$ and applying the admissibility rule (a) one deduces that all observables in a context with $P_a$ or $P_b$ must take the value 0.
One then notices that there are contexts containing 3 observables with the value 0, so we can deduce from (b) that the fourth must have the value 1.
If we follow this line of reasoning, we can continue to assign values to observables with the admissibility requirements, as depicted in Fig.~\ref{fig:Cabello}(a), where a black square represents the value 1, and a black circle the value 0.
As we can see, by considering the contexts $C_1$ and $C_2$ we can infer that $P_c$ must take both the values 1 and 0 respectively: both possibilities contradict the admissibility of $v$, as does the final possibility -- that $P_c$ is value indefinite.
Note that, in Fig.~\ref{fig:Cabello}(a), the contradiction obtained at $P_c$ marked by the cross is a consequence of a specific succession of applications of the admissibility rules (a) and (b) in Definition~\ref{admis-rules}.
By applying these rules in a different order, one can obtain the contradiction also at $P_d$, $P_e$, or $P_f$.

The most important aspect of this reasoning in this context is that it is deterministic: we proceed only by deducing the value definiteness of observables via (a) and (b).

Now let us  assume that $v(P_a)=1$ and $v(P_b)=0$, as depicted in Fig.~\ref{fig:Cabello}(b).
We again apply (a) to observables commuting with $P_a$;
however, we then see that neither (a) nor (b) can be used again to deduce the value of another observable.
Normally, in proving that this diagram permits no consistent assignment of definite values, one would then proceed by assuming that one of the unfilled observables, such as $P_c$, must have either $v(P_c)=1$ or $v(P_c)=0$, and trying both possibilities.
One can do this when proving the Kochen-Specker theorem since one assumes (i): that every observable must have a definite value.
However, in order to localise value indefiniteness we do not make this assumption.
Hence, the value assignment in Fig.~\ref{fig:Cabello}(b), with the observables represented by unfilled circles being value indefinite (e.g., $v(P_c)$ undefined) represents an admissible value assignment.

Thus, under the assumption that $v(P_a)=1$, Fig.~\ref{fig:Cabello} does not suffice to prove that $v(P_b)$ must be value indefinite,  and hence cannot be used to localise value indefiniteness.
It is not difficult to see that we reach the same conclusion irrespective of our choice of observables as $P_a$ and $P_b$.

In this paper, in proving the main theorem, we give a set of observables for which this is the case.
That is, there are observable $P_a$ and $P_b$ such that if $v(P_a)=1$ then both $v(P_b)=0$ and $v(P_b)=1$ lead, via admissibility, to contradictions.

\section{The localised variant of the Kochen-Specker theorem}

Let us now state the strengthened theorem which is the focus of this paper.
As we mentioned, this generalises the results of~\cite{Abbott:2012fk,Abbott:2013ly} and uses a different proof technique allowing for a more symmetrised analytic approach.
The result in~\cite{Abbott:2013ly}, on the other hand, relies on computational results and the interpretation of graphs.

\begin{theorem}
	\label{thm:main}
    Let $n\ge 3$ and $\ket{\psi}, \ket{\phi}\in \C^n$ be unit vectors such that $0 < |\iprod{\psi}{\phi}| < 1$.
    We can effectively find a finite set of one-dimensional projection observables $\mathcal{O}$ containing $P_\psi$ and $P_\phi$
	for which there is no admissible value assignment function on $\mathcal{O}$ such that $v(P_\psi)=1$ and $P_\phi$ is value definite.
\end{theorem}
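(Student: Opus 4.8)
The plan is to show that, for any admissible $v$ with $v(P_\psi)=1$, the observable $P_\phi$ can be assigned neither the value $1$ nor the value $0$, so that value definiteness of $P_\phi$ is impossible. First I would reduce to $n=3$. Since $\ket\psi,\ket\phi$ span a subspace $W$ of dimension at most $2$, fix a three-dimensional subspace $V\supseteq W$ and build all required observables inside $V\cong\C^3$. For $n>3$, each three-element orthonormal set in $V$ is completed to a context (an orthonormal basis of $\C^n$) by the \emph{same} fixed family $\ket{e_4},\dots,\ket{e_n}$ spanning $V^\perp$; including one context of the form $\{P_\psi,P_{\psi^\perp},P_{e_4},\dots,P_{e_n}\}$ forces $v(P_{e_i})=0$ for all $i$ by rule (a) of Definition~\ref{admis-rules}, after which admissibility of the $n$-element contexts is equivalent to admissibility of the underlying three-element contexts in $V$. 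Thus it suffices to work in $\C^3$.

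The central tool is a \emph{true-implies-false} (TIF) gadget: for rays $\ket a,\ket b$ with $0<|\iprod ab|<1$, a finite set $T(a,b)\ni P_a,P_b$ of observables in $\C^3$ admitting \emph{no} admissible $v$ with $v(P_a)=v(P_b)=1$. Granting such gadgets, the case $v(P_\phi)=1$ is immediate: simply include $T(\psi,\phi)$ in $\mathcal O$, and then $v(P_\psi)=v(P_\phi)=1$ is inadmissible.

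For the case $v(P_\phi)=0$ I would use a reduction that converts a $0$ into a $1$. Writing $W=\mathrm{span}(\ket\psi,\ket\phi)$, let $\ket\xi$ span $W^\perp\cap V$ (so $\ket\xi\perp\ket\psi$ and $\ket\xi\perp\ket\phi$), let $\ket\chi\in W$ be orthogonal to $\ket\phi$, and let $\ket\eta\in W$ be orthogonal to $\ket\psi$. Then $C=\{P_\phi,P_\chi,P_\xi\}$ and $C'=\{P_\psi,P_\xi,P_\eta\}$ are contexts. From $v(P_\psi)=1$, rule (a) applied to $C'$ gives $v(P_\xi)=0$; combined with the assumption $v(P_\phi)=0$, rule (b) applied to $C$ forces $v(P_\chi)=1$. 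Since $\ket\chi\perp\ket\phi$ while $\iprod\psi\phi\neq 0$, one finds $|\iprod\psi\chi|^2=1-|\iprod\psi\phi|^2\in(0,1)$, so $0<|\iprod\psi\chi|<1$ and the gadget $T(\psi,\chi)$ applies, making $v(P_\psi)=v(P_\chi)=1$ inadmissible -- a contradiction. Hence setting $\mathcal O=T(\psi,\phi)\cup T(\psi,\chi)\cup\{P_\phi,P_\chi,P_\xi,P_\eta,P_{\psi^\perp},P_{e_4},\dots,P_{e_n}\}$ (all completed to $n$-contexts as above) rules out $v(P_\phi)=0$ as well, and $P_\phi$ must be value indefinite.

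The main obstacle is the construction of the TIF gadget $T(a,b)$ for an \emph{arbitrary} inner product $0<|\iprod ab|<1$, since standard Kochen--Specker sets realise only specific angles. I would build $T(a,b)$ by interpolation, chaining elementary orthogonality gadgets through intermediate rays so that consecutive orthogonality constraints remain solvable, the hypothesis $0<|\iprod ab|<1$ being exactly what guarantees unit-vector solutions of the defining equations at each link; this is where the \emph{effectiveness} (an explicit, finite construction) and the ``reduction between Kochen--Specker sets'' enter. The remaining technical points to discharge are that the union $\mathcal O$ is a consistent finite set of rays, that its contexts are precisely the intended ones (no unintended orthogonalities introduce extra constraints), and that the forced propagation of values is genuinely deterministic under rules (a) and (b).
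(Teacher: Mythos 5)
Your overall architecture matches the paper's: reduce to $\C^3$, split into the two cases $v(P_\phi)=1$ and $v(P_\phi)=0$, and handle the latter by rotating within $\mathrm{span}(\ket\psi,\ket\phi)$ to an observable $P_\chi$ that admissibility forces to the value $1$ (your $\ket\chi,\ket\xi,\ket\eta$ are exactly the paper's $\ket{\phi'},\ket\beta,\ket\alpha$, and the computation $|\iprod{\psi}{\chi}|^2=1-|\iprod{\psi}{\phi}|^2$ is the same). That part is correct.

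The genuine gap is the construction of the ``true-implies-false'' gadget $T(a,b)$ for arbitrary $0<|\iprod{a}{b}|<1$, which you correctly identify as the main obstacle but then dispose of with an unsubstantiated appeal to ``interpolation, chaining elementary orthogonality gadgets.'' This is precisely the content of the paper, and the sketch as given would not go through for two reasons. First, the base gadget at a fixed angle is not available off the shelf: the paper reports that no existing Kochen--Specker diagram has the required property that $v(P_a)=1$ makes \emph{both} $v(P_b)=0$ and $v(P_b)=1$ inadmissible, so a new explicit set (the paper's 37 vectors and 26 contexts at $|\iprod{a}{b}|=\tfrac{1}{\sqrt 2}$, Lemma~\ref{lemma:ExplicitCase}) has to be exhibited and verified. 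Second, the reduction from an arbitrary angle to the base angle is sharply asymmetric, and ``chaining'' hides this. Forcing $v(P_c)=1$ at a new ray via rule (b) requires a context containing two observables already forced to $0$, each orthogonal to a ray already forced to $1$; the geometry of this constraint means a single step can move the inner product \emph{up} to any prescribed value (the Contraction Lemma, which settles $|\iprod{\psi}{\phi}|<\tfrac{1}{\sqrt2}$ in one application) but can move it \emph{down} only by a bounded amount and only while it exceeds $\tfrac13$ (the Expansion Lemma, $\iprod{c}{d}=3-\tfrac{4}{\alpha+1}$). Covering $|\iprod{\psi}{\phi}|>\tfrac{1}{\sqrt2}$ therefore requires iterating the expansion and proving the iteration terminates after finitely many steps (the paper's Iteration Lemma, via monotonicity of the decrement $D$). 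Without the explicit base set and without recognising and discharging this asymmetry and termination argument, the proof is not complete. (Minor additional point: your $n>3$ completion set $\{P_\psi,P_{\psi^\perp},P_{e_4},\dots,P_{e_n}\}$ has $n-1$ elements as written and so is not a context; you need a full orthonormal basis of $V$ together with $\ket{e_4},\dots,\ket{e_n}$.)
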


Before we proceed to prove Theorem~\ref{thm:main}, let us first discuss some important relevant issues.

This theorem has a slightly different form from the standard Kochen-Specker theorem because of the requirement that a particular observable in the set $\mathcal{O}$ be assigned the value 1.
However, since, as we will see, it is only the orthogonality relations between the observables in $\mathcal{O}$ which is important, a change of basis can always ensure that the required observable $P_\psi$ be assigned the value 1.

In order to interpret this result one has to take into account the eigenstate assumption discussed in the previous section: If a quantum system is prepared in a state $\ket{\psi}$ in $n\ge 3$ dimensional Hilbert space, then every one-dimensional projection observable that does not commute with $P_\psi$ is value indefinite and hence cannot have a predetermined measurement outcome.

\subsection{Insufficiency of existing Kochen-Specker diagrams}

The first question to address is whether existing Kochen-Specker diagrams (i.e., Greechie diagrams specifying the orthogonality relations of $\mathcal{O}$) could be used to provide a set $\mathcal{O}$ of observables proving Theorem~\ref{thm:main};
it is not \emph{a priori} obvious that such diagrams are unable to do so.
In Section~\ref{sec:CabelloExample} we showed, as an example, that a particular simple and well-known Kochen-Specker diagram is not sufficient for this purpose.
A careful search through existing diagrams showed that this is the case in general, and we were unable to find an existing Kochen-Specker diagram in which there are two observables $P_a$ and $P_b$ with the required property that if $v(P_a)=1$, both $v(P_b)=0$ and $v(P_b)=1$ lead to a contradiction.

A second conceptual problem with the use of fixed Kochen-Specker diagrams as in existing proofs is the following.
Since, in order to derive a contradiction, we need to assume that an observable $P_\psi$ in the given observable set has $v(P_\psi)=1$, this limits the observables which can be shown to be value indefinite to, at best, the remaining ones in $\mathcal{O}\setminus \{P_\psi\}$.
However, we wish to prove more: that \emph{every} one-dimensional projection observable not commuting with $P_\psi$ is value indefinite.

As a result, we need not only a set of observables with the required properties discussed above, but furthermore an approach to generalise this set of observables to arbitrary other observables.
We overcome this apparent lack of generality via a method of reductions, which we present in the next section and will return to discuss later on.

\subsection{Proof of Theorem~\ref{thm:main}}

We prove Theorem~\ref{thm:main} in three main steps:
\begin{enumerate}
	\item We first prove it for the special case that $|\iprod{\psi}{\phi}|=\frac{1}{\sqrt{2}}$.
		A similar result (for $|\iprod{\psi}{\phi}|=\frac{3}{\sqrt{14}}$) was shown in Ref.~\cite{Abbott:2012fk}, but this involved two separate diagrams applying to separate cases.
		Here we give a single diagram providing a much more compact, clear proof.
	\item We prove a simple reduction for $0 < |\iprod{\psi}{\phi}|<\frac{1}{\sqrt{2}}$ to the first case.
	\item The third and main part of the proof involves finding a reduction in the opposite sense, applying to the final $1 > |\iprod{\psi}{\phi}|>\frac{1}{\sqrt{2}}$ case.
		It is this final reduction allowing the complete proof that is the most involved technical aspect of this paper.
\end{enumerate}

As is standard in Kochen-Specker proofs~\cite{Cabello:1994ly}, we will work directly in the three-dimensional case of $\C^3$ (in fact, only $\R^3$ is needed), since the case for $n>3$ can be simply reduced to this situation.

\begin{lemma}\label{lemma:ExplicitCase}
	Given any two unit vectors $\ket{a},\ket{b}\in\C^3$ with $|\iprod{a}{b}|=\frac{1}{\sqrt{2}}$ there exists a finite set of one-dimensional projection observables $\mathcal{O}$ such that if $v(P_a)=1$ then $P_b$ is value indefinite under every admissible assignment function $v$ on $\mathcal{O}$.
\end{lemma}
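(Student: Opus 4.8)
\emph{Plan.} Because an admissible value assignment depends only on the orthogonality relations among the rays in $\mathcal{O}$ (Definition~\ref{admis-rules} is phrased purely in terms of contexts, i.e.\ orthonormal bases), I am free to rotate the whole configuration. So I would first fix convenient coordinates, for instance $\ket{a}=(0,0,1)$ and $\ket{b}=\frac{1}{\sqrt2}(1,0,1)$, which gives $\iprod{a}{b}=\frac{1}{\sqrt2}$ and lets me work entirely in $\R^3$. The task then reduces to a combinatorial–geometric one: exhibit a finite set of unit vectors (a Greechie diagram) containing $\ket{a}$ and $\ket{b}$ for which every admissible $v$ with $v(P_a)=1$ leaves $v(P_b)$ undefined. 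Equivalently, I must rule out both $(v(P_a),v(P_b))=(1,1)$ and $(v(P_a),v(P_b))=(1,0)$ while keeping ``$v(P_b)$ undefined'' admissible.

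I would build the diagram as two overlapping gadgets sharing the rays $\ket{a},\ket{b}$: a gadget $G_1$ forbidding $v(P_a)=v(P_b)=1$, and a gadget $G_0$ forbidding $v(P_a)=1$ together with $v(P_b)=0$. Each gadget is a handful of contexts arranged so that the two admissibility rules propagate the hypothesised values to a clash. The only moves available are rule~(a) -- a ray valued $1$ forces its two context-mates to $0$ -- and rule~(b) -- a context with two $0$'s forces its third ray to $1$. I would place the intermediate rays so that, in $G_1$, the $1$ starting at $P_a$ and the $1$ starting at $P_b$ are driven, by alternating (a) and (b), onto a common context that would then require two $1$'s, and so that, in $G_0$, the $1$ at $P_a$ and the $0$ at $P_b$ force a single ray to be simultaneously $0$ and $1$. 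This is precisely the deterministic ``fill-in'' reasoning illustrated for the Cabello diagram in Section~\ref{sec:CabelloExample}, and I would present each contradiction as such a chain.

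The delicate design constraint -- and the reason a generic Kochen--Specker diagram does not suffice -- is that the propagation starting from $v(P_a)=1$ \emph{alone} must not determine $v(P_b)$: each gadget has to genuinely consume the hypothesised value of $P_b$. I would therefore check that if $v(P_b)$ is left undefined, the rules force only the $0$'s in the contexts through $P_a$ and never cascade to pin down $P_b$, so that the partial assignment ``$1$ at $P_a$, the forced neighbouring $0$'s, undefined elsewhere'' is admissible; this makes the statement non-vacuous and shows $P_b$ is value \emph{indefinite} rather than forced. I would then set $\mathcal{O}=G_0\cup G_1$: since any admissible $v$ on $\mathcal{O}$ restricts to an admissible assignment on each gadget (every context of a gadget is a context of $\mathcal{O}$), the contradictions derived in $G_1$ and $G_0$ rule out $(1,1)$ and $(1,0)$ respectively. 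I would also verify that merging the two vector sets creates no unintended orthonormal triple that re-introduces forcing on $P_b$.

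The step I expect to be hardest is the explicit geometric realisation. I must produce closed-form unit vectors in $\R^3$ realising the \emph{entire} orthogonality graph of $G_0\cup G_1$ while holding $\iprod{a}{b}=\frac{1}{\sqrt2}$ exactly; at this boundary value the configuration is rigid, so solving the orthogonality equations, confirming that all intended orthogonalities (and no accidental extra ones) hold, and checking that the rays are pairwise distinct and nondegenerate, is where the real work lies. I would manage this by giving the gadget enough free angular parameters that the $45^\circ$ condition between $\ket{a}$ and $\ket{b}$ can be met, then substituting the resulting explicit vectors and verifying the two admissibility chains by direct inspection.
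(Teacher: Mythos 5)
Your plan correctly identifies the overall architecture of the argument, and it matches the paper's: fix a convenient real basis for $\ket{a},\ket{b}$ (the orthogonality relations are all that matter), split into the two cases $v(P_b)=1$ and $v(P_b)=0$, and in each case propagate values deterministically using only the admissibility rules (a) and (b) until a context receives three $0$'s or two $1$'s. Your observation that the propagation from $v(P_a)=1$ alone must \emph{not} pin down $P_b$ is also the right design constraint, and your remark that the union of gadgets must not introduce accidental orthonormal triples is a legitimate concern.

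However, there is a genuine gap: the entire mathematical content of this lemma is the explicit configuration, and you have not produced it. You defer the ``explicit geometric realisation'' as the hardest step and propose to handle it by giving the gadget ``enough free angular parameters,'' but this understates the difficulty in a way that matters. The paper reports that a careful search through existing Kochen--Specker diagrams found \emph{none} with the required property that $v(P_a)=1$ forces a contradiction for both values of $v(P_b)$; so the existence of such a diagram cannot be taken for granted or reduced to routine equation-solving. The paper's proof supplies a specific set of $37$ rays in $\R^3$ forming $26$ contexts (with $\ket{a}=(1,0,0)$, $\ket{b}=\tfrac12(1,\sqrt2,1)$), and verifies two explicit deduction chains --- both of which, notably, terminate in the \emph{same} context $\{P_1,P_{25},P_{27}\}$ receiving three $0$'s, rather than in two separate gadgets as you propose. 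Your two-gadget decomposition would be acceptable if each gadget were exhibited, but as written the proposal is a reduction of the lemma to an existence claim that is exactly as hard as the lemma itself. Until the vectors and contexts are written down and the two propagation chains checked, the proof is incomplete.
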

\begin{proof}
	By choosing an appropriate basis we can assume, without loss of generality, that $\ket{a}=(1,0,0)$ and $\ket{b}=\frac{1}{2}(1,\sqrt{2},1)$.
	Let us consider the set $\mathcal{O}=\{P_a,P_b, P_i;\ i=1,\dots,35$\} of one-dimensional projection observables where the vectors $\ket{i}$ for $i=1,\dots,35$ are defined in Table~\ref{table:vectorList} (with the normalisation factors emitted for simplicity).
	The orthogonality relations between these vectors gives the 26 contexts shown in Table~\ref{table:contextList}.
	Note that these observables are `tightly' connected: the context-observable ratio is relatively high.
	The Greechie diagram showing the orthogonality relations is shown in Fig.~\ref{fig:greechie}.
	
	\begin{table}[ht]
		\caption{The 37 vectors specifying the observables used in the proof of Lemma~\ref{lemma:ExplicitCase}, with normalisation factors omitted.}
		\begin{tabular}{lllll}
		\hline
		$\ket{a}=(1,0,0)$ & $\ket{b} = (\sqrt{2},1,1)$ & $\ket{1}=(0,1,1)$ & $\ket{2} = (0,1,-1)$ & $\ket{3}=(\sqrt{2},-1,-1)$\\
		$\ket{4}=(0,0,1)$ & $\ket{5}=(0,1,0)$ & $\ket{6}=(\sqrt{2},1,-3)$ & $\ket{7}=(1,-\sqrt{2},0)$ & $\ket{8}=(\sqrt{2},-3,1)$\\
		$\ket{9}=(1,0,-\sqrt{2})$ & $\ket{10}=(\sqrt{2},1,0)$ & $\ket{11}=(\sqrt{2},0,1)$ & $\ket{12}=(\sqrt{2},-2,-3)$ & $\ket{13}=(1,-\sqrt{2},\sqrt{2})$\\
		$\ket{14}=(\sqrt{2},-3,-2)$ & $\ket{15}=(1,\sqrt{2},-\sqrt{2})$ & $\ket{16}=(\sqrt{8},1,-1)$ & $\ket{17}=(\sqrt{8},-1,1)$ & $\ket{18}=(\sqrt{2},-7,-3)$\\
		$\ket{19}=(\sqrt{2},-1,3)$ & $\ket{20}=(\sqrt{2},-3,-7)$ & $\ket{21}=(\sqrt{2},3,-1)$ & $\ket{22}=(1,\sqrt{2},0)$ & $\ket{23}=(1,0,\sqrt{2})$\\
		$\ket{24}=(\sqrt{2},-1,-3)$ & $\ket{25}=(\sqrt{2},-1,1)$ & $\ket{26}=(\sqrt{2},-3,-1)$ & $\ket{27}=(\sqrt{2},1,-1)$ & $\ket{28}=(\sqrt{2},-1,0)$\\
		$\ket{29}=(\sqrt{2},0,-1)$ & $\ket{30}=(\sqrt{2},2,3)$ & $\ket{31}=(\sqrt{2},3,2)$ & $\ket{32}=(\sqrt{2},3,7)$ & $\ket{33}=(\sqrt{2},7,3)$\\
		$\ket{34}=(\sqrt{2},1,3)$ & $\ket{35}=(\sqrt{2},3,1)$ & & &\\
		\hline
		\end{tabular}
		\label{table:vectorList}
	\end{table}
	
	\begin{table}[ht]
		\caption{The 26 contexts used in the proof of Lemma~\ref{lemma:ExplicitCase}.}
		\begin{tabular}{llllll}
		\hline
		$C_1=\{P_a,P_1,P_2\}$ & $C_2=\{P_a,P_4,P_5\}$ & $C_3=\{P_b,P_2,P_3\}$ & $C_4=\{P_b,P_6,P_7\}$ & $C_5=\{P_b,P_8,P_9\}$\\
		$C_6=\{P_4,P_7,P_{10}\}$ & $C_7=\{P_5,P_9,P_{11}\}$ & $C_8=\{P_{10},P_{12},P_{13}\}$ & $C_9=\{P_{11},P_{14},P_{15}\}$ & $C_{10}=\{P_1,P_{13},P_{16}\}$\\
		$C_{11}=\{P_1,P_{15},P_{17}\}$ & $C_{12}=\{P_{16},P_{18},P_{19}\}$ & $C_{13}=\{P_{17},P_{20},P_{21}\}$ & $C_{14}=\{P_3,P_{19},P_{22}\}$ & $C_{15}=\{P_3,P_{21},P_{23}\}$\\
		$C_{16}=\{P_{22},P_{24},P_{25}\}$ & $C_{17}=\{P_{23},P_{26},P_{27}\}$ & $C_{18}=\{P_4,P_{22},P_{28}\}$ & $C_{19}=\{P_5,P_{23},P_{29}\}$  & $C_{20}=\{P_{15},P_{28},P_{30}\}$\\
		$C_{21}=\{P_{13},P_{29},P_{31}\}$ & $C_{22}=\{P_8,P_{16},P_{32}\}$ & $C_{23}=\{P_6,P_{17},P_{33}\}$ & $C_{24}=\{P_7,P_{27},P_{34}\}$  & $C_{25}=\{P_9,P_{25},P_{35}\}$\\
		$C_{26}=\{P_1,P_{25},P_{27}\}$. & & & &\\
		\hline
		\end{tabular}
		\label{table:contextList}
	\end{table}
	
	\begin{figure}
		\begin{center}
		\includegraphics{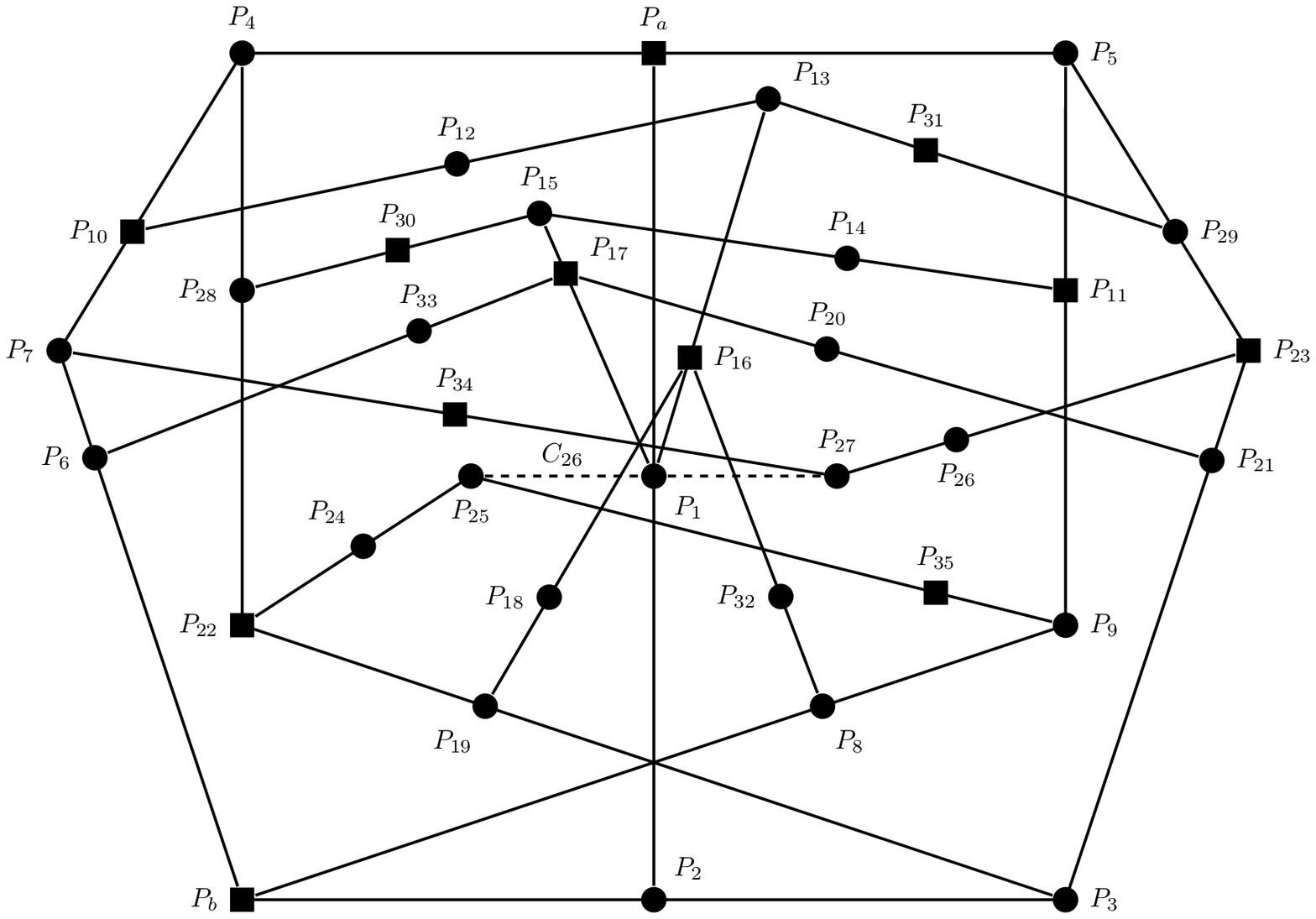}
		\if01
		\begin{tikzpicture}  [scale=0.6]

		\tikzstyle{every path}=[line width=1pt]
		\tikzstyle{c1}=[circle,minimum size=6]
		\tikzstyle{s1}=[rectangle,minimum size=9]
		\tikzstyle{l1}=[draw=none,circle,minimum size=4]


		\draw (4,0) coordinate[s1,fill,label=180:$P_b$] (b)
			-- (13,0) coordinate[c1,fill,label=45:$P_2$] (2)
			-- (22,0) coordinate[c1,fill,label=0:$P_3$] (3)
			-- (26,12) coordinate[c1,fill,pos=0.8,label=0:$P_{21}$] (21) coordinate[s1,fill,label=0:$P_{23}$] (23)
			-- (22,18.5) coordinate[c1,fill,pos=0.4,label=0:$P_{29}$] (29) coordinate[c1,fill,label=0:$P_5$] (5)
			-- (13,18.5) coordinate[s1,fill,label=90:$P_a$] (a)
			-- (4,18.5) coordinate[c1,fill,label=90:$P_4$] (4)
			-- (0,12) coordinate[s1,fill,pos=0.6,label=180:$P_{10}$] (10) coordinate[c1,fill,label=180:$P_7$] (7)
			-- (b) coordinate[c1,fill,pos=0.2,label=180:$P_6$] (6);
	
		\draw (a) -- (2) coordinate[c1,fill,pos=0.5,label=315:$P_1$] (1);

		\draw (5) -- (22,6) coordinate[s1,fill,pos=0.4,label=0:$P_{11}$] (11) coordinate[c1,fill,label=0:$P_9$] (9)
			-- (b) coordinate[c1,fill,pos=0.3,label=280:$P_8$] (8);
	
		\draw (4) -- (4,6) coordinate[c1,fill,pos=0.4,label=180:$P_{28}$] (28) coordinate[s1,fill,label=180:$P_{22}$] (22)
			-- (3) coordinate[c1,fill,pos=0.3,label=260:$P_{19}$] (19);

		\coordinate (25) at ([xshift=-4cm]1);
		\coordinate (27) at ([xshift=4cm]1);

		\draw (22) -- (25) coordinate[c1,fill,pos=0.5,label=115:$P_{24}$] (24) coordinate[c1,fill,label=270:$P_{25}$] (25)
			-- (9) coordinate[s1,fill,pos=0.8,label=90:$P_{35}$] (35);
	
		\draw (7) -- (27) coordinate[s1,fill,pos=0.5,label=90:$P_{34}$] (34) coordinate[c1,fill,label=90:$P_{27}$] (27)
			-- (23) coordinate[c1,fill,pos=0.3,label=270:$P_{26}$] (26);
	
		\draw (10) -- (15.5,17.5) coordinate[c1,fill,pos=0.5,label=90:$P_{12}$] (12) coordinate[c1,fill,label=15:$P_{13}$] (13)
			-- (29) coordinate[s1,fill,pos=0.4,label=90:$P_{31}$] (31);
	
		\draw (28) -- (10.5,15) coordinate[s1,fill,pos=0.5,label=90:$P_{30}$] (30) coordinate[c1,fill,label=90:$P_{15}$] (15)
			-- (11) coordinate[c1,fill,pos=0.6,label=90:$P_{14}$] (14);

		\draw (15) -- (1) coordinate[s1,fill,pos=0.2,label=15:$P_{17}$] (17)
			-- (13) coordinate[s1,fill,pos=0.3,label=0:$P_{16}$] (16);
	
		\draw (19) -- (16) coordinate[c1,fill,pos=0.3,label=180:$P_{18}$] (18)
			-- (8) coordinate[c1,fill,pos=0.7,label=180:$P_{32}$] (32);

		\draw (6) -- (17) coordinate[c1,fill,pos=0.7,label=90:$P_{33}$] (33)
			-- (21) coordinate[c1,fill,pos=0.4,label=90:$P_{20}$] (20);
	
		\draw[dashed] (25) -- (1) -- (27);
		
		\coordinate (ContextLabel) at ([shift=({-2cm,-3mm})]1);
		\draw (ContextLabel) coordinate[l1,label=90:$C_{26}$];

		\end{tikzpicture}
		\fi
		\end{center}
		\caption{Greechie diagram showing the orthogonality relation between the observables in Table~\ref{table:vectorList}. We have shown the deduction for $v(P_a)=v(P_b)=1$, where black squares represent the value 1, and circles the value 0. Observe that the context $C_{26}$, shown dotted, contains three observables with the value 0, and hence $v$ is not admissible.}
		\label{fig:greechie}
	\end{figure}
	
	Let us assume, for the sake of contradiction, than an admissible $v$ exists for $\mathcal{O}$, with $v(P_a)=1$ and $v(P_b)$ defined (i.e., $P_b$ value definite).
	Then there are two cases: $v(P_b)=1$ or $v(P_b)=0$.
	
	\emph{Case 1: $v(P_b)=1$.}
	Since $P_a\in C_1,C_2$ and $v(P_a)=1$, admissibility requires that $v(P_1)=v(P_2)=v(P_4)=v(P_5)=0$.
	Similarly, since $P_b\in C_3,C_4,C_5$ we have $v(P_3)=v(P_6)=v(P_7)=v(P_8)=v(P_9)=0$.
	Since $v(P_4)=v(P_7)=0$, admissibility in $C_6$ means that we must have $v(P_{10})=1$; similarly $v(P_{11})=1$ also.
	This chain of reasoning can be continued, applying the admissibility rules from Definition~\ref{admis-rules} one context at a time, as shown in Table~\ref{table:proofTable1}.
	In this table, where the leftmost column indicates the value of $v$ on the given observables, the values shown in bold in each column (context) are deduced from the admissibility rules based on the values of the other observables in the context which have already been deduced in the preceding columns.
	Note that, at each step, admissibility requires that certain observables take particular values;
	we never proceed by reasoning that $v(P_i)$ must be either 0 or 1 for some $P_i$ as is common in proofs of the standard Kochen-Specker theorem (except for $P_b$, where this is exactly the assumption that $P_b$ is value definite), because this is not required by admissibility.
	Eventually, as we see, we deduce that $v(P_1)=v(P_{25})=v(P_{27})=0$.
	But since $C_{26}=\{P_1,P_{25},P_{27}\}$, this contradicts the admissibility of $v$.
	
	\begin{table}[ht]
		\setlength{\tabcolsep}{6pt}
		\caption{The values that must be taken for the shown observables under any admissible assignment function $v$ satisfying $v(P_a) = v(P_b)=1$. The value (shown in the leftmost column) for observables in bold is deduced from the admissibility rules and observables appearing in columns to the left of that observable in the table.}
		\begin{tabular}{cccccccccccccccccccc}
		\hline
		$v$ && $C_1$ & $C_2$ & $C_3$ & $C_4$ & $C_5$ & $C_6$ & $C_7$ & $C_8$ & $C_9$ & $C_{10}$ & $C_{11}$ & $C_{12}$ & $C_{13}$ & $C_{14}$ & $C_{15}$ & $C_{16}$ & $C_{17}$\\
		\hline
		\hline
		$1$ && $P_a$ & $P_a$ & $P_b$ & $P_b$ & $P_b$ & $\bf P_{10}$ & $\bf P_{11}$ & $P_{10}$ & $P_{11}$ & $\bf P_{16}$ & $\bf P_{17}$ & $P_{16}$ & $P_{17}$ & $\bf P_{22}$ & $\bf P_{23}$ & $P_{22}$ & $P_{23}$\\
		$0$ && $\bf P_1$ & $\bf P_4$ & $P_2$ & $\bf P_6$ & $\bf P_8$ & $P_4$ & $P_5$ & $\bf P_{12}$ & $\bf P_{14}$ & $P_1$ & $P_1$ & $\bf P_{18}$ & $\bf P_{20}$ & $P_3$ & $P_3$ & $\bf P_{24}$ & $\bf P_{26}$\\
		$0$ && $\bf P_2$ & $\bf P_5$ & $\bf P_3$ & $\bf P_7$ & $\bf P_9$ & $P_7$ & $P_9$ & $\bf P_{13}$ & $\bf P_{15}$ & $P_{13}$ & $P_{15}$ & $\bf P_{19}$ & $\bf P_{21}$ & $P_{19}$ & $P_{21}$ & $\bf P_{25}$ & $\bf P_{27}$\\
		\hline
		\end{tabular}
		\label{table:proofTable1}
	\end{table}
	
	\emph{Case 2: $v(P_b)=0$.}
	By following a similar line of reasoning, shown in Table~\ref{table:proofTable2}, we once again deduce that $v(P_1)=v(P_{25})=v(P_{27})=0$, a contradiction.

	\begin{table}[ht]
		\setlength{\tabcolsep}{6pt}
		\caption{The values that must be taken for the shown observables under any admissible assignment function $v$ satisfying $v(P_a)=1$ and $v(P_b)=0$. As in Table~\ref{table:proofTable1}, the bold values represent the observables with values deduced from previous observables in the table.}
		\begin{tabular}{cccccccccccccccccccc}
		\hline
		$v$ && $C_1$ & $C_2$ & $C_3$ & $C_{14}$ & $C_{15}$ & $C_{18}$ & $C_{19}$ & $C_{20}$ & $C_{21}$ & $C_{10}$ & $C_{11}$ & $C_{22}$ & $C_{23}$ & $C_4$ & $C_5$ & $C_{24}$ & $C_{25}$\\
		\hline
		\hline
		$1$ && $P_a$ & $P_a$ & $\bf P_3$ & $P_3$ & $P_3$ & $\bf P_{28}$ & $\bf P_{29}$ & $P_{28}$ & $P_{29}$ & $\bf P_{16}$ & $\bf P_{17}$ & $P_{16}$ & $P_{17}$ & $\bf P_7$ & $\bf P_9$ & $P_7$ & $P_9$\\
		$0$ && $\bf P_1$ & $\bf P_4$ & $P_b$ & $\bf P_{19}$ & $\bf P_{21}$ & $P_4$ & $P_5$ & $\bf P_{15}$ & $\bf P_{13}$ & $P_1$ & $P_1$ & $\bf P_8$ & $\bf P_6$ & $P_b$ & $P_b$ & $\bf P_{27}$ & $\bf P_{25}$\\
		$0$ && $\bf P_2$ & $\bf P_5$ & $P_2$ & $\bf P_{22}$ & $\bf P_{23}$ & $P_{22}$ & $P_{23}$ & $\bf P_{30}$ & $\bf P_{31}$ & $P_{13}$ & $P_{15}$ & $\bf P_{32}$ & $\bf P_{33}$ & $P_6$ & $P_8$ & $\bf P_{34}$ & $\bf P_{35}$\\
		\hline
		\end{tabular}
		\label{table:proofTable2}
	\end{table}
	
	Hence, we must conclude that $P_b$ cannot be value definite if $v$ is admissible on $\mathcal{O}$.
\end{proof}

We next show a `contraction' lemma that constitutes a simple `forcing' of value definiteness: given $P_a$ and $P_b$ with $v(P_a)=v(P_b)=1$, there is a $\ket{c}$ which is `closer' (i.e.,~at a smaller angle of our choosing; contracted) to both $\ket{a}$ and $\ket{b}$, for which $v(P_c)=1$ also.
This result was proved in~\cite{Abbott:2012fk}, but we reproduce the short proof here for completeness.
The form of the vectors $\ket{c_\pm}$ specified in the lemma will be used several times in the rest of the paper.
\begin{lemma}[Contraction Lemma, \cite{Abbott:2012fk}]
	\label{lemma:reduction1}
	 Given any two unit vectors $\ket{a},\ket{b}\in\C^3$ with $0 < |\iprod{a}{b}| < 1$ and a $z\in\C$ such that $|\iprod{a}{b}| < |z| < 1$, we can effectively find a unit vector $\ket{c}$ with $
	\iprod{a}{c}=z$,  and a finite set of one-dimensional projection observables $\mathcal{O}$ containing $P_a$, $P_b$, $P_c$
	 such that if $v(P_a)=v(P_b)=1$, then $v(P_c)=1$, for
every admissible assignment function  $v$  on $\mathcal{O}$.
	
	Furthermore, if we choose our basis such that $\ket{a}= (0,0,1)$ and $\ket{b}=(\sqrt{1-|p|^2},0,p)$, where $p=\iprod{a}{b}$, then $\ket{c}$ can only be one of the following two vectors: $\ket{c_\pm}=(x,\pm y,z)$, where $z=\iprod{a}{c}$, $x=p(1-z^2)/(z\sqrt{1-p^2})$ and $y=\sqrt{1-x^2-z^2}$.
\end{lemma}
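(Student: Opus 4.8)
The plan is to build a small ``true-implies-true'' gadget: a finite orthogonality configuration in which the values $v(P_a)=v(P_b)=1$ propagate, through the admissibility rules, to force $v(P_c)=1$. Working in $\R^3$ (the general complex case reduces to this by an overall phase, since only the orthogonality relations matter), I would first fix the basis so that $\ket{a}=(0,0,1)$ and $\ket{b}=(\sqrt{1-p^2},0,p)$ with $p=\iprod{a}{b}$, and posit $\ket{c}=(x,\pm y,z)$, which makes the requirement $\iprod{a}{c}=z$ automatic. The single remaining degree of freedom $x$ will be pinned down by the geometry of the gadget, and $y=\sqrt{1-x^2-z^2}$ is then fixed by normalisation.

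The gadget itself uses orthocomplements. I would set $\ket{e}=\ket{a}\times\ket{c}$ and $\ket{f}=\ket{b}\times\ket{c}$, so that $\ket{e},\ket{f}\perp\ket{c}$, together with $\ket{g}=\ket{a}\times\ket{e}$ and $\ket{h}=\ket{b}\times\ket{f}$. This yields three contexts: $\{P_a,P_e,P_g\}$, $\{P_b,P_f,P_h\}$, and $\{P_c,P_e,P_f\}$. The forcing then runs as follows: since $v(P_a)=1$, admissibility rule (a) applied in $\{P_a,P_e,P_g\}$ gives $v(P_e)=0$; likewise $v(P_b)=1$ forces $v(P_f)=0$ in $\{P_b,P_f,P_h\}$; finally, in $\{P_c,P_e,P_f\}$ two of the three observables already carry the value $0$, so rule (b) forces $v(P_c)=1$. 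Taking $\mathcal{O}=\{P_a,P_b,P_c,P_e,P_f,P_g,P_h\}$ completes the construction, and it is effective because every vector is given by an explicit cross-product formula.

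The one substantive point — and the step I would carry out with care — is checking that $\{P_c,P_e,P_f\}$ really is a context, i.e.\ that $\ket{e}\perp\ket{f}$. Using the identity $(\ket{a}\times\ket{c})\cdot(\ket{b}\times\ket{c})=\iprod{a}{b}-(\iprod{a}{c})(\iprod{c}{b})$ for a unit vector $\ket{c}$, the condition $\iprod{e}{f}=0$ becomes $p=z\,\iprod{b}{c}$; substituting $\iprod{b}{c}=x\sqrt{1-p^2}+pz$ and solving gives exactly $x=p(1-z^2)/(z\sqrt{1-p^2})$, recovering the stated formula. I would then verify realisability: with this $x$ one finds $y^2=(1-z^2)(z^2-p^2)/\big(z^2(1-p^2)\big)$, which is strictly positive precisely because $|p|<|z|<1$, so $\ket{c_\pm}$ are genuine unit vectors, distinct through the sign of $y$; one also checks that $\ket{a},\ket{c}$ and $\ket{b},\ket{c}$ are non-parallel so the cross products do not degenerate. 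The hard part is thus not the forcing argument, which is immediate, but confirming that the orthogonality constraint closing the third context is exactly what selects the claimed $\ket{c_\pm}$; the remaining bookkeeping, that the seven vectors are distinct and the three contexts well-formed, is routine.
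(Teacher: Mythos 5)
Your proposal is correct and is essentially the paper's own proof: the same basis choice, the same cross-product gadget $\ket{e}=\ket{a}\times\ket{c}$, $\ket{f}=\ket{b}\times\ket{c}$ with the two orthocomplement completions, the same three contexts, and the same two-step forcing via admissibility rules (a) and (b). The only (cosmetic) difference is that you derive $x=p(1-z^2)/(z\sqrt{1-p^2})$ from the constraint $\iprod{e}{f}=0$, whereas the paper posits this $x$ and then verifies the orthogonality; your order of argument in fact makes the ``only one of two vectors'' clause slightly more transparent.
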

\begin{proof}
	Without loss of generality, we assume the $\iprod{a}{b}\in \R$ and choose a basis so that $\ket{a}=(0,0,1)$ and $\ket{b}=(q,0,p)$ where $p=\iprod{a}{b}$ and $q=\sqrt{1-p^2}$.
	
	Note that, since $p<|z|$ and thus $p^2<z^2$ we have
	$$\frac{p^2(1-z^2)}{q^2z^2} = \frac{p^2-p^2z^2}{q^2z^2} < \frac{z^2-p^2z^2}{q^2z^2} = \frac{(1-p^2)z^2}{q^2z^2} = 1.$$
	If we let $x=\frac{p(1-z^2)}{qz}$ we thus have
	$$x^2 = \frac{p^2(1-z^2)}{q^2z^2}(1-z^2) < 1-z^2.$$
	We can then set $y = \sqrt{1-x^2-z^2}\in\R$, making $\ket{c}=(x,y,z)$ a unit vector such that $\iprod{a}{c}=z$.
	
	Let $\ket{\alpha}=\ket{a}\times\ket{c}=(-y,x,0)$, $\ket{\beta}=\ket{b}\times\ket{c}=(-py,px-qz,qy)$ and note that $\iprod{\alpha}{\beta}=0$ also.
	Thus, if we let $\ket{\alpha'}=\ket{a}\times\ket{\alpha}$ and $\ket{\beta'}=\ket{b}\times\ket{\beta'}$, then  $\{\ket{a},\ket{\alpha},\ket{\alpha'}\}$, $\{\ket{b},\ket{\beta},\ket{\beta'}\}$ and $\{\ket{\alpha},\ket{\beta},\ket{c}\}$ are all orthonormal bases for $\R^3$ and thus $C_1=\{P_\alpha,P_\beta,P_c\}$, $C_2=\{P_a,P_\alpha,P_{\alpha'}\}$ and $C_3=\{P_b,P_\beta,P_{\beta'}\}$ are all contexts in $\mathcal{O}=C_1\cup C_2 \cup C_3$.
	This construction is illustrated in Fig.~\ref{fig:reduction1}.
	
	If $v$ is an admissible assignment function on $\mathcal{O}$ with $v(P_a)=v(P_b)=1$ then we must have $v(P_\alpha)=v(P_\beta)=0$ and hence $v(P_c)=1$, as required.
\end{proof}

\begin{figure}[ht]
\begin{center}
	\includegraphics{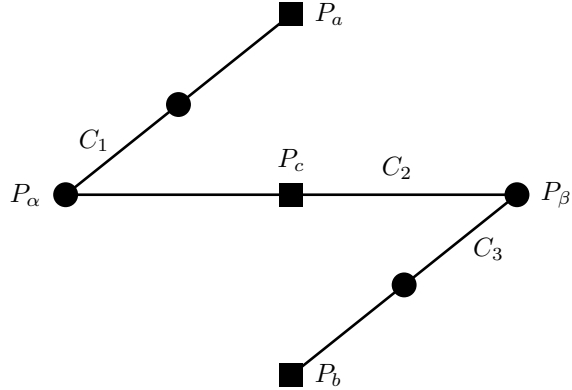}
	\if01
	\begin{tikzpicture}  [scale=0.6]

	\tikzstyle{every path}=[line width=1pt]
	\tikzstyle{c1}=[circle,minimum size=6]
	\tikzstyle{s1}=[rectangle,minimum size=9]
	\tikzstyle{l1}=[draw=none,circle,minimum size=4]

	\draw (5,0) coordinate[s1,fill,label=0:$P_b$] (b)
		-- (10,4) coordinate[c1,fill,pos=0.5] (c3) coordinate[l1,pos=0.7,label=0:$C_3$] coordinate[c1,fill,label=0:$P_\beta$] (beta)
		-- (0,4) coordinate[s1,fill,pos=0.5,label=90:$P_c$] (c) coordinate[l1,pos=0.35,label=25:$C_2$] coordinate[c1,fill,label=180:$P_\alpha$] (alpha)
		-- (5,8) coordinate[c1,fill,pos=0.5] (c2) coordinate[l1,pos=0.3,label=180:$C_1$] coordinate[s1,fill,label=0:$P_a$] (a);

	\end{tikzpicture}
	\fi
\end{center}
\caption{Greechie orthogonality diagram with an overlaid value assignment that illustrates the reduction in Lemma~\ref{lemma:reduction1}.
         Once again, the circles and squares represent observables that have the values $0$ and $1$ respectively.}
\label{fig:reduction1}
\end{figure}

We now present a proof for the reduction in the opposite direction: finding (from $\ket{a},\ket{b}$) two vectors $\ket{c},\ket{d}$ specifying observables $P_c,P_d$ for which $v(P_c)=v(P_d)=1$, and which are further apart from each other than $\ket{a}$ is from $\ket{b}$.
This is made easier by noting that it is not necessary to find a vector $\ket{c}$ `further' from $\ket{a}$ than $\ket{b}$, but rather just two vectors further from each other than $\ket{a}$ is from $\ket{b}$.

This process is broken into two steps.
We first prove an `Expansion Lemma' which, unlike the Contraction Lemma, does not find two vectors arbitrarily far apart satisfying the required criteria.
Rather, we then show a further lemma, the `Iteration Lemma', proving that this expansion can be iterated to meet the required conditions.

\begin{lemma}[Expansion Lemma]
	\label{lemma:reduction2}
	Given any two unit vectors $\ket{a},\ket{b}\in\C^3$ with $\frac{1}{3} < |\iprod{a}{b}|<1$, we can effectively find unit vectors $\ket{c},\ket{d}$ with $0<|\iprod{c}{d}|<|\iprod{a}{b}|$ and a finite set of one-dimensional projection observables $\mathcal{O}$ containing $P_a,P_b,P_c,P_d$ such that if $v(P_a)=v(P_b)=1$, then $v(P_{c})=v(P_{d})=1$, for
every admissible assignment function  $v$  on $\mathcal{O}$.
\end{lemma}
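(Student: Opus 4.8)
The plan is to build the Expansion Lemma out of two applications of the Contraction Lemma (Lemma~\ref{lemma:reduction1}), exploiting the fact that for a fixed target inner product $z=\iprod{a}{c}$ the Contraction Lemma produces one of the two mirror-image vectors $\ket{c_\pm}=(x,\pm y,z)$. First I would adopt the basis of the Contraction Lemma, writing $p=|\iprod{a}{b}|$ (real and positive without loss of generality), $\ket{a}=(0,0,1)$ and $\ket{b}=(\sqrt{1-p^2},0,p)$, and fix a value $z$ with $p<|z|<1$. Applying the Contraction Lemma once yields a set $\mathcal{O}_c$ containing $P_a,P_b,P_c$ with $\ket{c}=\ket{c_+}=(x,y,z)$ forcing $v(P_c)=1$ whenever $v(P_a)=v(P_b)=1$. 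Since reflection across the plane $y=0$ is an orthogonal transformation fixing both $\ket{a}$ and $\ket{b}$, applying it to the whole construction gives a second set $\mathcal{O}_d$, containing $P_a,P_b,P_d$ with $\ket{d}=\ket{c_-}=(x,-y,z)$, that forces $v(P_d)=1$ under the same hypothesis. Taking $\mathcal{O}=\mathcal{O}_c\cup\mathcal{O}_d$ (which shares only $P_a,P_b$) then forces $v(P_c)=v(P_d)=1$, as required.

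It then remains to choose $z$ so that $\ket{c}$ and $\ket{d}$ are genuinely further apart than $\ket{a},\ket{b}$. Because $\ket{c_\pm}$ are real, $\iprod{c}{d}=x^2-y^2+z^2=2x^2+2z^2-1$, using $y^2=1-x^2-z^2$. Substituting $x=p(1-z^2)/(z\sqrt{1-p^2})$ from the Contraction Lemma and writing $w=z^2$, I would minimise $f(w)=2p^2(1-w)^2/(w(1-p^2))+2w-1$ over $p^2<w<1$; a short calculation gives a unique interior critical point at $w=p$, i.e.\ $z=\sqrt{p}$, with minimal value $f(p)=(3p-1)/(1+p)$.

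The crux is then the double inequality $0<(3p-1)/(1+p)<|\iprod{a}{b}|=p$. The upper bound follows from $(3p-1)/(1+p)<p\iff(p-1)^2>0$, valid since $p<1$; the lower bound is precisely where the hypothesis $|\iprod{a}{b}|>\tfrac13$ enters, guaranteeing $3p-1>0$ and hence $\iprod{c}{d}>0$. Choosing $z=\sqrt{p}$ (which satisfies $p<\sqrt{p}<1$, so the Contraction Lemma applies, and forces $y>0$ so that $\ket{c}\ne\ket{d}$) therefore delivers vectors with $0<|\iprod{c}{d}|<|\iprod{a}{b}|$, and the whole construction is effective since the Contraction Lemma is. I expect the main obstacle to be book-keeping rather than conceptual: verifying that the two mirror-image contraction gadgets can be glued along $P_a,P_b$ without cross-orthogonalities forcing $P_c$ or $P_d$ to $0$ (here $\iprod{c}{d}\ne0$ guarantees $P_c,P_d$ share no context), and confirming that the optimisation attains its minimum strictly inside the admissible interval, so that the threshold $\tfrac13$ is exactly what makes the construction expand.
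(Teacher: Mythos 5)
Your proposal is correct and amounts to the paper's own construction: two mirror-image applications of the Contraction-Lemma gadget force $v(P_{c_+})=v(P_{c_-})=1$, and your resulting inner product $(3p-1)/(1+p)$ is exactly the paper's Eqn.~\ref{eqn:iprodfn}, $\iprod{c}{d}=3-\tfrac{4}{\alpha+1}$, with the threshold $\tfrac13$ entering in the same place (positivity of $\iprod{c}{d}$). The only difference is how the pair is located --- you find $z=\sqrt{p}$ by minimising $\iprod{c_+}{c_-}$ over the Contraction Lemma's free parameter, whereas the paper works in a basis symmetric in $\ket{a},\ket{b}$ and places $\ket{c},\ket{d}$ in the $yz$-plane, explicitly declining to prove that this choice is optimal (a small point your calculation settles); your closing worry about cross-orthogonalities in $\mathcal{O}_c\cup\mathcal{O}_d$ is harmless, since extra contexts only add admissibility constraints and cannot undo the forcing of $v(P_c)=v(P_d)=1$.
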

\begin{proof}
	Let $\iprod{a}{b}=\alpha$.
	Without loss of generality, we will consider only the positive, real case of $\frac{1}{3}<\alpha <1$.
	We fix an orthonormal basis such that, written in this basis, $\ket{a}$ and $\ket{b}$ lie in the $xz$-plane bisected by the $z$-axis.
	In this basis we thus have
	$$\ket{a}=\left(\sqrt{1-\beta^2},0,\beta\right),\ \ket{b}=\left(-\sqrt{1-\beta^2},0,\beta\right),$$
	where
	\begin{equation}\label{eqn:beta}
		\beta=\sqrt{\frac{\alpha+1}{2}}\,\cdot
	\end{equation}
	It is readily confirmed that
	$$\iprod{a}{b}=\beta^2-(1-\beta^2)=2\beta^2-1=\alpha$$
	as desired.
	Note that
	we thus have
	\begin{equation}\label{eq:betaBound}\sqrt{\frac{2}{3}}<\beta<1.\end{equation}

	Figure~\ref{fig:lemma2sphere} shows the contour representing all the possible vectors specifying observables which can be forced to take the value 1 from the construction in Lemma~\ref{lemma:reduction1}.
	We use two applications of Lemma~\ref{lemma:reduction1} applied to $\ket{a},\ket{b}$ to give two such vectors $\ket{c},\ket{d}$ lying in the $yz$-plane. 
	
	\begin{figure}[ht]
	\begin{center}
		\includegraphics{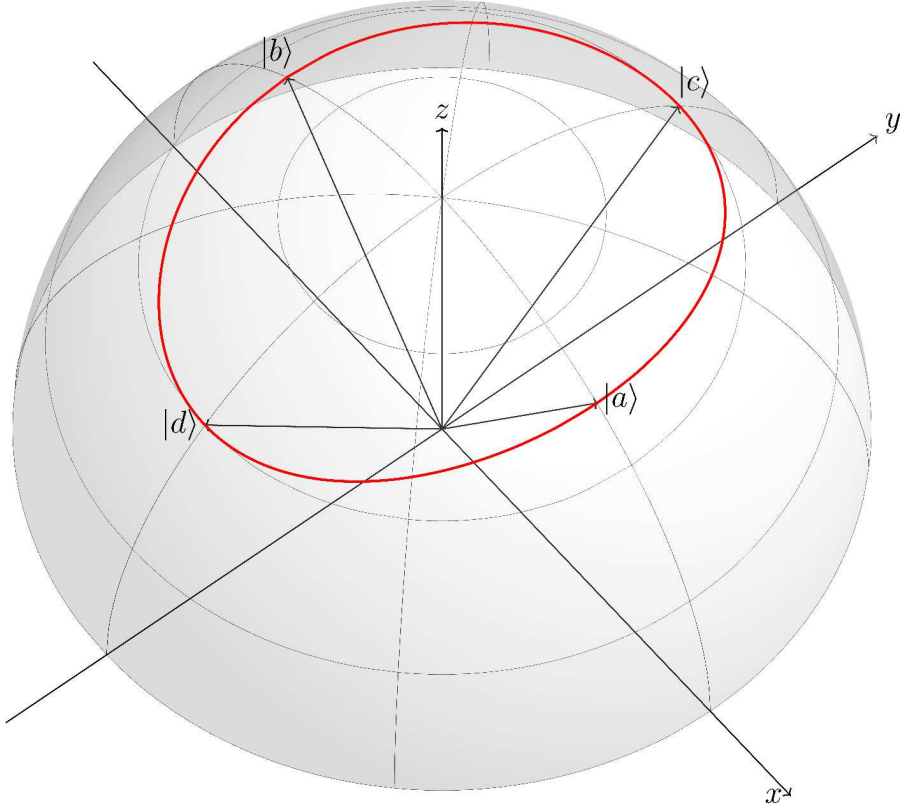}
	\end{center}
	\caption{A plot of the possible vectors $\ket{c}$ corresponding to the one-dimensional projection observables that Lemma~\ref{lemma:reduction1} can force to take the value 1. The bold (red; colour online) curve represents the position on the unit sphere of such vectors for given $\ket{a},\ket{b}$. Note the $\ket{c}$ and $\ket{d}$ are further apart from each other than $\ket{a}$ and $\ket{b}$.}
	\label{fig:lemma2sphere}
	\end{figure}
	
	We can also see, at least for the chosen values of $\ket{a},\ket{b}$ that are shown in Fig.~\ref{fig:lemma2sphere}, that $\iprod{a}{b}>\iprod{c}{d}$.
	Indeed it appears that the vectors `$\ket{c}$', `$\ket{d}$' shown in the $yz$-plane provide the maximum separation, and the symmetry under exchange of $\ket{a}$ and $\ket{b}$ of Lemma~\ref{lemma:reduction1}  seems to support this.
	However, it is not necessary to prove this is the case.
	Rather, we will show directly that the vectors $\ket{c},\ket{d}$ provide the required expansion.
	To do so, we derive a simple explicit form for $\ket{c},\ket{d}$ and thus $\iprod{c}{d}$.
	We focus first on finding $\ket{c}$; the form of $\ket{d}$ follows immediately.
	
	Rather than use basis-transformations to attempt to apply Lemma~\ref{lemma:reduction1} to find the form of $\ket{c},\ket{d}$ in this specific case, we will re-derive the result explicitly making use of our symmetrised basis choice.
	
	The vectors $\ket{a},\ket{b},\ket{c}$ need to follow the orthogonality relations shown in Fig.~\ref{fig:reduction1} in order to conclude that $v(P_{c})=1$.
	 That is, we need vectors $\ket{e},\ket{f}$ such that $\{\ket{e},\ket{f},\ket{c}\}$ is an orthonormal set, and further that $\iprod{a}{e}=\iprod{b}{f}=0$.
	
	Since we choose $\ket{c}$ to be in the $yz$-plane, we can write it in the parameterised form
	$\ket{c}=\left(0,\sqrt{1-\gamma^2},\gamma\right)$, where $\gamma >0$ remains to be found.
	Since $\ket{e}$ should be orthogonal to both $\ket{a}$ and $\ket{c}$, we have  $$\ket{e}=\ket{a}\times\ket{c}=\left(-\beta\sqrt{1-\gamma^2},-\gamma\sqrt{1-\beta^2},\sqrt{(1-\beta^2)(1-\gamma^2)}\right).$$
	Similarly, we have $$\ket{f}=\ket{b}\times\ket{c}=\left(-\beta\sqrt{1-\gamma^2},\gamma\sqrt{1-\beta^2},-\sqrt{(1-\beta^2)(1-\gamma^2)}\right).$$
	Further, the orthogonality of $\ket{e}$ and $\ket{f}$ gives us
	\begin{align*}
		 \iprod{e}{f}&=\beta^2(1-\gamma^2)-\gamma^2(1-\beta^2)-(1-\beta^2)(1-\gamma^2)\\
		 &=\beta^2 - \beta^2\gamma^2 - \gamma^2 + \beta^2\gamma^2 - 1 + \gamma^2 + \beta^2 - \beta^2\gamma^2\notag\\
		 &=2\beta^2-\beta^2\gamma^2-1\notag\\
		 &=0\notag
	\end{align*}
	and hence
	$\beta^2(2-\gamma^2)=1.$
	Thus,
	\begin{equation}\label{eqn:gamma}
		\gamma=\sqrt{2-\frac{1}{\beta^2}}\,\cdot
	\end{equation}
	Further, it is readily verified that $\frac{1}{\sqrt{2}}<\gamma<1$ for $\sqrt{\frac{2}{3}}<\beta<1,$ and hence for all $\frac{1}{3}<\alpha<1$ (recall Eqn.~\ref{eq:betaBound}).
	
	Similarly, we find $\ket{d}=(0,-\sqrt{1-\gamma^2},\gamma)$ using a further two auxiliary vectors $\ket{g},\ket{h}$ forming the orthonormal set $\{\ket{d},\ket{g},\ket{h}\}$ where $\iprod{a}{g}=\iprod{b}{h}=0$.
	
	Thus, if we take $\mathcal{O}=\{P_a,P_b,P_c,P_d,P_e,P_f,P_g,P_h\}$, as a result of the orthogonality relationships expressed in Fig.~\ref{fig:reduction1}, $v(P_c)=v(P_d)=1$ for any admissible $v$ on $\mathcal{O}$ with $v(P_a)=v(P_b)=1$.
	
	It remains then just to show that
	\begin{equation}\label{eqn:gleb}
		\iprod{c}{d}=2\gamma^2-1 < \iprod{a}{b}=\alpha=2\beta^2-1.
	\end{equation}
	We note that $\iprod{c}{d}>0$ for $\gamma>\frac{1}{\sqrt{2}}$.
	
	We finish the proof by showing proving Eqn.~\ref{eqn:gleb}, that is, that $\iprod{c}{d} < \alpha$, or, equivalently, $\gamma^2 < \beta^2$.
	But since we can write
	$$\left(\beta-\frac{1}{\beta}\right)^2=\beta^2-\frac{1}{\beta^2}-2$$
	we have from Eqn.~\ref{eqn:gamma}
	$$\gamma^2=2-\frac{1}{\beta^2}=\beta^2-\left(\beta-\frac{1}{\beta}\right)^2<\beta^2,$$
	 concluding the proof.
	
	 We note for completeness that we can write $\iprod{c}{d}$ directly in terms of $\alpha$ from Eqns.~\ref{eqn:beta}, \ref{eqn:gamma} and \ref{eqn:gleb} as
	 \begin{equation}\label{eqn:iprodfn}
	 	\iprod{c}{d} = 3-\frac{4}{\alpha+1}\,\cdot
	 \end{equation}
\end{proof}

We now prove that by iterating this procedure we can find a pair of vectors arbitrarily far apart from each other.

\begin{lemma} [Iteration  Lemma]
	\label{lemma:reduction3}
	
	Given any two unit vectors $\ket{a},\ket{b}\in\C^3$ with $\frac{1}{3}< |\iprod{a}{b}|<1$, we can effectively find unit vectors $\ket{c},\ket{d}$ with $0<|\iprod{c}{d}|\le \frac{1}{3}$ and a finite set of one-dimensional projection observables $\mathcal{O}$ containing $P_a,P_b,P_{c},P_{d}$ such that if $v(P_a)=v(P_b)=1$, then $v(P_{c})=v(P_{d})=1$, for
every admissible assignment function  $v$  on $\mathcal{O}$.
\end{lemma}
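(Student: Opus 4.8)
The plan is to iterate the Expansion Lemma (Lemma~\ref{lemma:reduction2}), using the fact that each application strictly decreases the inner product, and to show that after finitely many steps the inner product falls to $\frac{1}{3}$ or below. Assuming as usual that $\iprod{a}{b}$ is real and positive, I would write $\alpha_0=|\iprod{a}{b}|\in(\frac13,1)$ and set $(\ket{c_0},\ket{d_0})=(\ket{a},\ket{b})$. As long as $\alpha_k>\frac13$, Lemma~\ref{lemma:reduction2} applied to the pair $\ket{c_k},\ket{d_k}$ yields a new pair $\ket{c_{k+1}},\ket{d_{k+1}}$ together with a finite observable set $\mathcal{O}_k$ containing all four, such that $v(P_{c_k})=v(P_{d_k})=1$ forces $v(P_{c_{k+1}})=v(P_{d_{k+1}})=1$ for every admissible $v$ on $\mathcal{O}_k$, while by Eqn.~\ref{eqn:iprodfn} the inner product evolves under the explicit map
\begin{equation*}
	\alpha_{k+1}=f(\alpha_k),\qquad f(t)=3-\frac{4}{t+1}.
\end{equation*}
I would run the iteration until the first index $N$ with $\alpha_N\le\frac13$, and then take $\ket{c}=\ket{c_N}$, $\ket{d}=\ket{d_N}$ and $\mathcal{O}=\bigcup_{k<N}\mathcal{O}_k$.

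The crux of the argument—and the step I expect to be the main obstacle—is proving that such a finite $N$ exists. A naive attempt to exhibit a uniform contraction factor $\alpha_{k+1}\le\lambda\,\alpha_k$ with $\lambda<1$ must fail, since $f$ has a parabolic fixed point at $t=1$ with $f'(1)=1$, so the per-step decrease becomes arbitrarily small as $\alpha_k\to1$. Instead I would argue qualitatively. Lemma~\ref{lemma:reduction2} guarantees $\alpha_{k+1}=f(\alpha_k)<\alpha_k$, so as long as the iterates remain in $(\frac13,1)$ the sequence $(\alpha_k)$ is strictly decreasing and bounded below by $\frac13$, hence convergent to some $L\ge\frac13$. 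By continuity of $f$ the limit would satisfy $L=f(L)$; but $f(t)=t$ rearranges to $(t-1)^2=0$, so $t=1$ is the unique fixed point, contradicting $L\le\alpha_0<1$. Therefore the iterates cannot stay above $\frac13$ forever, so $\alpha_N\le\frac13$ for some finite $N$, which is moreover located explicitly by computing the $\alpha_k$ one by one, making the construction effective.

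It then remains to tie up the straightforward parts. The set $\mathcal{O}$ is finite, being a finite union of finite sets, and the bounds $0<\iprod{c}{d}\le\frac13$ hold because $f$ is increasing with $f(\frac13)=0$, so $\alpha_N=f(\alpha_{N-1})>0$, while $\alpha_N\le\frac13$ is the stopping condition. Finally, the forcing chains correctly: the restriction of any admissible $v$ on $\mathcal{O}$ to each $\mathcal{O}_k$ is again admissible, since a context of $\mathcal{O}_k$ is also a context of $\mathcal{O}$ and the admissibility conditions are imposed context by context; hence the implications of the successive applications compose, and $v(P_a)=v(P_b)=1$ propagates through all intermediate pairs to give $v(P_c)=v(P_d)=1$. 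One should check, as is standard when gluing Kochen--Specker gadgets, that successive steps share only the pair on which they are joined so that no unintended orthogonality relations are introduced; this can be arranged by taking the auxiliary vectors produced by Lemma~\ref{lemma:reduction2} in generic position, and does not affect the forcing conclusion, which follows already from admissibility restricted to each $\mathcal{O}_k$.
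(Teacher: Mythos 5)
Your proposal is correct and follows essentially the same route as the paper's proof: iterate Lemma~\ref{lemma:reduction2} starting from $(\ket{c_0},\ket{d_0})=(\ket{a},\ket{b})$, track the inner products via the recurrence $\alpha_{k+1}=3-\frac{4}{\alpha_k+1}$ of Eqn.~\ref{eqn:iprodfn}, stop at the first $N$ with $\alpha_N\le\frac{1}{3}$, and take the union of the finite gadgets, with the forcing composing because any admissible $v$ on the union restricts to an admissible assignment on each piece. The only divergence is in the termination argument: the paper bounds the per-step decrement below by the positive constant $D(\alpha_0)=\alpha_0-\alpha_1$ (using that $D(u)=u-s(u)$ is strictly decreasing on $\left(\frac{1}{3},1\right)$), which yields an explicit iteration count of roughly $\left(\alpha_0-\frac{1}{3}\right)/D(\alpha_0)$, whereas you argue by monotone convergence to the unique fixed point $t=1$ of the iteration map; both are valid, your remark that no uniform contraction factor exists near the parabolic fixed point at $1$ is well taken, and effectiveness is preserved in your version since $N$ is found by simply running the computable iteration.
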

\begin{proof}
	We prove by iterating Lemma~\ref{lemma:reduction2}, and use the notation $\ket{c_0}\equiv \ket{a}$ and $\ket{d_0}\equiv \ket{b}$, indicating the 0th iteration.
	We start with $\ket{c_0},\ket{d_0}$ and for each $i\ge 0$, as long as $\ket{c_{i}},\ket{d_{i}}$ satisfy $\iprod{c_{i}}{d_{i}}> \frac{1}{3}$, apply the construction used in the proof of Lemma~\ref{lemma:reduction2} to generate $\ket{c_{i+1}},\ket{d_{i+1}}$ for the next iteration. In  particular, $\ket{c_{i+1}},\ket{d_{i+1}}$  satisfy the equality~\eqref{eqn:iprodfn} for  $\alpha_i=\iprod{c_i}{d_i}$ (in particular, $\alpha_0=\iprod{c_0}{d_0}=\iprod{a}{b}$).

	By Lemma~\ref{lemma:reduction2}, we know that $\iprod{c_i}{d_i}>\iprod{c_{i+1}}{d_{i+1}}$ for each iteration $i$.
We now prove that	
	the process cannot produce an infinite sequence $\ket{c_0},\ket{d_0};\ket{c_1},\ket{d_1};\cdots$, with $\iprod{c_i}{d_i}>\frac{1}{3}$ for all $i$, that is, for
some  $i$ we have $\iprod{c_i}{d_i}\le\frac{1}{3}$.
	(The sequence must stop here, since Lemma \ref{lemma:reduction2} cannot be applied for $\iprod{c_i}{d_i}\le \frac{1}{3}$.)
	
	From Eqn.~\ref{eqn:iprodfn} we define the function $s:\left(\frac{1}{3},1\right)\to (0,1)$ such that $$s(u)=3-\frac{4}{u+1}\,\raisebox{.8mm}{,}$$ giving the inner product of the next pair in the iteration.
	We thus have $s(\alpha_0)=\alpha_1$ and, more generally, $\alpha_i=s^i(\alpha_0)$.
	We can thus rephrase the problem: \emph{does there exist a $k$ such that $s^k(\alpha_0)\le\frac{1}{3}$?}
	
	Let us, for the sake of contradiction, assume the contrary.
	Then $(\alpha_i)_i=(s^i(\alpha_0))_i$ is an infinite strictly decreasing sequence of reals with $\alpha_i>\frac{1}{3}$ for all $i$.
	For any finite $i$ we thus have
	\begin{align*}
		s^i(\alpha_0)=\alpha_i &= \alpha_0 - |\alpha_1 - \alpha_0| - \cdots - |\alpha_i - \alpha_{i-1}|\notag\\
		&= \alpha_0 - (\alpha_0 - \alpha_1) - \cdots - (\alpha_{i-1}-\alpha_i)\notag\\
		&= \alpha_0 - \sum_{k=0}^{i-1}(\alpha_k - \alpha_{k+1}).		
	\end{align*}
	Let us define the function $D:\left(\frac{1}{3},1\right)\to \left(0,\frac{1}{3}\right)$ such that $$D(u) = u - s(u) = u - \left(3-\frac{4}{u+1}\right)$$ so that
	$$
		\alpha_i = \alpha_0 - \sum_{k=0}^{i-1}D(\alpha_k).
	$$

	We can show that $\frac{\dd D}{\dd u} < 0$ for $u\in\left(\frac{1}{3},1\right)$: calculating the derivative we have
		$$\frac{\dd D}{\dd u} = 1 - \frac{4}{(u+1)^2}
			< 1 - \frac{4}{(1+1)^2}= 0.$$
	Since $D$ is thus a strictly decreasing function
	on $\left(\frac{1}{3},1\right)$ and $\alpha_k<\alpha_0$ for all $k>0$,
	we have $D(\alpha_0)< D(\alpha_k)$ for all $k>0$.
	Hence we set
	$$\alpha_i = \alpha_0 - \sum_{k=0}^{i-1}D(\alpha_k) < \alpha_0 - iD(\alpha_0).$$
	Since $D(\alpha_0)=\alpha_0 - \alpha_1 > 0$ is a positive constant, it is not possible that $s^i(\alpha_0) = \alpha_i > \frac{1}{3}$, for all $i>0$, because in this case we would have $\frac{1}{3} < \alpha_{0}-iD(\alpha_{0})$, for all $i>0$, a contradiction.

	In fact, if $k$ is the smallest positive integer greater than $\frac{\alpha_{0}-\frac{1}{3}}  {D(\alpha_{0})}$, then $\alpha_k\le   \frac{1}{3}$, as required.
	We note that $s^{k+1}(\alpha_0)$ is not defined.
	
	By Lemma~\ref{lemma:reduction2}, for each $i=0,\dots,k-1$ there exists a set $\mathcal{O}_i$ of one-dimensional projection observables such that $v(P_{c_{i+1}})=v(P_{d_{i+1}})=1$ under any $v$ admissible on $\mathcal{O}_i$ satisfying $v(P_{c_i})=v(P_{d_i})=1$.
	Hence, if we take the set $\mathcal{O}=\cup_{i=0}^{k-1}\mathcal{O}_i$ we must have $v(P_{c_k})=v(P_{d_k})=1$ under any admissible $v$ on $\mathcal{O}$ satisfying $v(P_a)=v(P_b)=1$, and $\iprod{c_k}{d_k}\le \frac{1}{3}$, as required.
\end{proof}

With these lemmata proved, we are in a position to combine them to prove Theorem~\ref{thm:main}.

\begin{proof}[Proof of Theorem~\ref{thm:main}]	
	If we have $|\iprod{\psi}{\phi}|=\frac{1}{\sqrt{2}}$ then, by Lemma~\ref{lemma:ExplicitCase}, there exists a finite set $\mathcal{O}$ of one-dimensional projection observables for which there is no admissible $v$ on $\mathcal{O}$ satisfying the requirements, so we are done.
	
	Otherwise, we proceed directly to prove that if $\mathcal{O}$ is a set of one-dimensional projection observables containing $P_\psi,P_\phi$ then no admissible assignment function $v$ on $\mathcal{O}$ with $v(P_\psi)=1$ can have $P_\phi$ value definite.
	We show this in two cases: first that $v(P_\phi)\neq 1$ and then that $v(P_\phi)\neq 0$.
	Let us first show that there is a set $\mathcal{O}_1$ for which $v(P_\phi)\neq 1$ if $v$ is admissible on $\mathcal{O}_1$.
	
	There are two cases: either $0 < |\iprod{\psi}{\phi}| < \frac{1}{\sqrt{2}}$ or $1 > |\iprod{\psi}{\phi}| > \frac{1}{\sqrt{2}}$.
	
	If $0 < |\iprod{\psi}{\phi}| < \frac{1}{\sqrt{2}}$, then by Lemma~\ref{lemma:reduction1} there exists a vector $\ket{\phi'}$ such that $\iprod{\psi}{\phi'}=\frac{1}{\sqrt{2}}$ and a set $\mathcal{O}_2$ of observables containing $P_\psi,P_\phi,P_{\phi'}$ such that if $v$ is admissible on $\mathcal{O}_2$, $v(P_{\phi'})=1$ also.
	But, by Lemma~\ref{lemma:ExplicitCase}, there exists a set $\mathcal{O}_3$ of one-dimensional projection observables containing $P_\psi,P_{\phi'}$ such that if $v$ is admissible on $\mathcal{O}_3$ and $v(P_\psi)=1$, $P_{\phi'}$ must be value indefinite.
	Thus, if we take $\mathcal{O}_1=\mathcal{O}_2\cup \mathcal{O}_3$ we cannot have $v(P_\phi)=1$ as required.
	
	If $1 > |\iprod{\psi}{\phi}| > \frac{1}{\sqrt{2}}$, then by Lemma~\ref{lemma:reduction3} there exist two vectors $\ket{\psi'},\ket{\phi'}$ such that $0<|\iprod{\psi'}{\phi'}|\le \frac{1}{3}$ and a set $\mathcal{O}_4$ of observables containing $P_\psi,P_\phi,P_{\psi'},P_{\phi'}$ such that if $v$ is admissible on $\mathcal{O}_4$ then $v(P_{\psi'})=v(P_{\phi'})=1$ also.
	But, by Lemma~\ref{lemma:reduction1}, there exists a vector $\ket{\phi''}$ such that $\iprod{\psi'}{\phi''}=\frac{1}{\sqrt{2}}$ and a set $\mathcal{O}_5$ of observables containing $P_{\psi'},P_{\phi''},P_{\phi'}$ such that if $v$ is admissible, $v(P_{\phi''})=1$ also.
	Finally, once more by Lemma~\ref{lemma:ExplicitCase}, there exists a set $\mathcal{O}_6$ for which $v$ there is no admissible $v$ on $\mathcal{O}_5$ satisfying $v(P_{\psi'})=v(P_{\phi''})=1$.
	Hence, there is no admissible $v$ on the set $\mathcal{O}_1=\mathcal{O}_4\cup \mathcal{O}_5\cup\mathcal{O}_6$ such that $v(P_\phi)=1$ as required.
	
	This shows that there exists a set $\mathcal{O}_1$ of one-dimensional projection observables containing $P_\psi,P_\phi$ such that we cannot have $v(P_\phi)=1$ if $v(P_\psi)=1$ if $v$ is admissible $\mathcal{O}_1$.
	It remains to show that there exists a set $\mathcal{O}_0$ such that we cannot have $v(P_\phi)=0$ if $v$ is admissible on $\mathcal{O}_0$.
	
	Let us assume, without loss of generality, that $\ket{\psi}=(1,0,0)$ and $\ket{\phi}=(p,\sqrt{1-p^2},0)$ where $p=|\iprod{\psi}{\phi}|$.
	Then let $\ket{\alpha}=(0,1,0)$, $\ket{\beta}=(0,0,1)$ and $\ket{\phi'}=(\sqrt{1-p^2},p,0)$.
	Then $\{\ket{\psi},\ket{\alpha},\ket{\beta}\}$ and $\{\ket{\phi},\ket{\phi'},\ket{\beta}\}$ are orthonormal bases for $\C^3$ and hence $C_1=\{P_\psi,P_\alpha,P_\beta\}$ and $C_2=\{P_\phi,P_{\phi'},P_\beta\}$ are contexts in $\mathcal{O}_7=C_1 \cup C_2$.
	But if $v$ is admissible on $\mathcal{O}_7$ and $v(P_\psi)=1$, $v(P_\phi)=0$, admissibility implies that $v(P_\phi)=1$.
	
	As we have shown just before, there exists a set $\mathcal{O}_8$ of one-dimensional projection observables containing $P_\psi,P_{\phi'}$ such that there is no admissible assignment function $v$ on $\mathcal{O}_8$ with $v(P_\psi)=v(P_{\phi'})=1$, and hence there is no admissible $v$ on $\mathcal{O}_0=\mathcal{O}_7\cup \mathcal{O}_8 $ such that $v(P_\psi)=1$ and $v(P_\phi)=0$.
	
	Having covered all cases, we are forced to conclude that there is a set $\mathcal{O}=\mathcal{O}_0\cup \mathcal{O}_1$ of observables containing $P_\psi$ and $P_\phi$ such that if $v(P_\psi)=1$, $P_\phi$ cannot be value definite if $v$ is admissible on $\mathcal{O}$.
\end{proof}

\section{Discussion}

The important difference between Theorem~\ref{thm:main} and the Kochen-Specker theorem lies in what physical conclusions can be drawn from the theorems which, of course, are purely mathematical results.
Key to interpreting such theorems is the recognition that a value assignment represents a possible hidden variable assignment for a quantum system, and that the value assigned to an observable thus represents the result that would be obtained upon its measurement.
Under this interpretation the Kochen-Specker theorem shows that, given a system prepared in the quantum state $\ket{\psi}$ in dimension 3 or higher Hilbert space, the results of all possible measurements on the state $\ket{\psi}$ cannot be predetermined (noncontextually) as they would in a classical theory.
It says nothing, however, about whether all, or simply a few, outcomes are not predetermined. On the other hand, 
Theorem~\ref{thm:main} implies that no one-dimensional projection observable $P$
can have a predetermined measurement outcome for the system unless 
$\ket{\psi}$ is an eigenstate of $P$.
This interpretation relies on the eigenstate assumption discussed earlier in the paper, stating that the observable $P_\psi$ has a predetermined measurement outcome -- a very weak assumption.
Conceptually, this means that Theorem~\ref{thm:main} goes further than the Kochen-Specker theorem in showing the \emph{extent} of non-classicality that the quantum logic event-structure implies.

It is possible to generalise this result -- that formally applies only to one-dimensional projection observables --
to the value-indefiniteness of more general classes of observables.
Since an observable $A$ (formally a Hermitian operator in $n$-dimensional Hilbert space)
with a non-degenerate spectrum, distinct eigenvalues $a_1,\dots,a_n$ and eigenstates $\ket{a_1},\dots,\ket{a_n}$
can be expressed as its spectral decomposition
$A=\sum_{i=1}^n a_i P_{a_i}$ (where $P_{a_i}=\frac{\oprod{a_i}{a_i}}{|\iprod{a_i}{a_i}|}$, as usual),
it physically has a predetermined measurement outcome if and only if all the projectors $P_{a_i}$, $i=1,\dots, n$, have predetermined measurement outcomes\footnote{Specifically, if one such $P_{a_i}$ has the predetermined value $1$ then one must obtain $a_i$ upon measurement of $A$; admissibility then requires that all $P_{a_j}$ have the definite value 0 for $j\neq i$.} -- that is, are value definite.
Thus, for a system prepared in a state $\ket{\psi}$ in dimension 3 or higher Hilbert space, the outcome of a measurement of an observable $A$ with non-degenerate spectra cannot be predetermined (noncontextually) unless $\ket{\psi}$ is an eigenstate of $A$.


\subsection{Proof size}

Since the first appearance of the Kochen-Specker theorem~\cite{Kochen:1967fk}, much attention has been given to reducing the number of observables and contexts needed to obtain a contradiction and prove the theorem.
The original result used a set of 117 observables, but more recent results have, to quote some notable examples, shown sets containing 31 observables (in three-dimensional Hilbert space)~\cite{Peres:1991ys} and 18 observables (in four-dimensional Hilbert space)~\cite{Cabello:1996zr}.

While such results do not affect the interpretation of the theorem, they have merit in showing the depth of the contradiction between the classical and quantum logical structures.
More recently, smaller proofs have been of particular interest since these have been used to derive noncontextuality inequalities that can be experimentally tested~\cite{Cabello:2008hc} in the same vein as Bell-inequalities~\cite{Bell:1966uq};
smaller sets of observables lead to smaller and more readily testable inequalities.

Conceptually, however, the key point is probably that the theorem can be proved using a \emph{finite} set of observables;
if a contradiction only arose when an infinity of observables were considered, this would potentially raise questions about the constructive and operational character of the theorem and its use of counterfactuals, hence its interpretation would be more questionable~\cite{Pitowsky:1998aa}.

The localised nature of Theorem~\ref{thm:main} immediately means that a single finite set $\mathcal{O}$ of one-dimensional projection observables will never suffice to prove the value indefiniteness of all such projection observables $P_\phi$ not commuting with $P_\psi$ for a given state $\ket{\psi}$.
There are infinitely many such observables, and one must, by definition, include $P_\phi$ in $\mathcal{O}$ to localise value indefiniteness to $P_\phi$.
Rather, the nature of Theorem~\ref{thm:main} means we must look for constructive methods to obtain a set $\mathcal{O}_\phi$ for a given $P_\phi$, which is precisely what we have done in our proof of the result.

Of course, a given set of orthogonality relations (i.e., a Greechie diagram) may be realisable for an infinity of different such sets $\mathcal{O}$, as is the case with the diagram depicted in Fig.~\ref{fig:reduction1}.
Thus, it would be preferable to find a given set of orthogonality relations for which a set $\mathcal{O}_\phi$ of observables realising these relations and containing both $P_\psi$ and $P_\phi$ for any $P_\phi$.
Since we were unable to give such a set of relations, we had to iterate Lemma~\ref{lemma:reduction2} a number of times times depending on $P_\psi$, with no upper bound (but only ever finitely many times).

Furthermore, it seems that it is difficult, if not impossible, to succeed in giving a fixed set of orthogonality relations that works in all cases.
In order to show an observable $P_a$ has $v(P_a)=1$ using the admissibility requirements, one must give a context $\{P_a,P_b,P_c\}\subset \mathcal{O}$ for which it is already known that $v(P_b)=v(P_c)=0$.
This implies two observable $P_d$ and $P_e$ such that $v(P_d)=v(P_e)=1$ and $\iprod{b}{d}=\iprod{c}{e}=0$.
But this is precisely the case described in Lemma~\ref{lemma:reduction1}.
However, in Lemma~\ref{lemma:reduction2} we showed the limitations of this process in `widening the angle' between vectors whose corresponding projectors both take the value 1 -- hence the necessity of iterating Lemma~\ref{lemma:reduction2}.

As a result it seems that, in contrast to the Kochen-Specker theorem, arbitrarily large (but always finite) sets of observables are needed to show that a given observable $P_\phi$ is value indefinite.
Nonetheless, the critical point is once again that for \emph{any given} $P_\phi$, we can show that $P_\phi$ is value indefinite with a \emph{finite} set of observables, and hence that the counterfactual reasoning used is no more problematic than in the Kochen-Specker theorem.

\subsection{State-independence and testability}

One of the strengths of the Kochen-Specker theorem that has been repeatedly emphasised is the fact that the contradiction between its hypotheses is derived independently of the state a quantum system is prepared in; this is commonly referred to as state-independence.
This is in contrast to violation of Bell-type inequalities (which occur only for particular entangled states) and shows that the non-classicality  results from the structure of quantum mechanics itself, rather than features of particular states, such as entanglement~\cite{Kirchmair:2009gr,Zu:2012tj}.
Consequently, various experimental inequalities based on the Kochen-Specker theorem that, although often simpler, are state-dependent have been criticised, and much effort has been expended to find simple, state-independent inequalities to test~\cite{Cabello:2008hc}.

In contrast to the Kochen-Specker theorem, the form of Theorem~\ref{thm:main} and, in particular, the interpretation (relying, of course, on the eigenstate assumption) that \emph{for a system prepared in a given state $\ket{\psi}$}, any one-dimensional projection observable $P_\phi$ not commuting with $P_\psi$ is value indefinite, may suggest that Theorem~\ref{thm:main} does not share this state-independence.
As a result, this issue deserves a little discussion.

The state-independence of the Kochen-Specker theorem ensures that no quantum state in $n\ge 3$ dimensional Hilbert space admits a classical assignment of definite values to all observables within certain finite sets.
This is true also with Theorem~\ref{thm:main}: for any quantum state $\ket{\psi}$, all projection observables not contained within the `star' of one-dimensional
projection observables commuting with $P_\psi$ (see Fig.~\ref{fig:star}) are value indefinite.

\begin{figure}
	\begin{center}
		\includegraphics{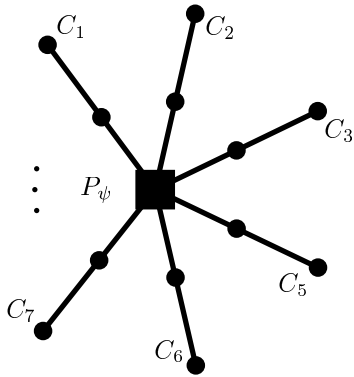}
		\if01
		\begin{tikzpicture}  [scale=0.8]
		\tikzstyle{every path}=[line width=1pt]
		\tikzstyle{every node}=[draw,line width=1pt,inner sep=0]

		\tikzstyle{c1}=[rectangle,minimum size=6]

		\tikzstyle{d1}=[circle,draw=none,fill,minimum size=2]

		\tikzstyle{l7}=[draw=none,circle,minimum size=45]

		\draw[black,line width=2pt] (0:0) -- (135:3)
		        coordinate[c1,minimum size=14,fill,at start] (0)
		        coordinate[c1,circle,midway,fill] (1)
		        coordinate[c1,circle,at end,fill,label=35:$C_1$] (2);

		\draw[black,line width=2pt] (0.center) -- (90:3)
		        coordinate[c1,at start] (0)
		        coordinate[c1,circle,midway,fill] (3)
		        coordinate[c1,circle,at end,fill,label=350:$C_2$] (4);

		\draw[black,line width=2pt] (0.center) -- (45:3)
		        coordinate[c1,at start] (0)
		        coordinate[c1,circle,midway,fill] (5)
		        coordinate[c1,circle,at end,fill,label=305:$C_3$] (6);

		\draw[black,line width=2pt] (0.center) -- (315:3)
		        coordinate[c1,at start] (0)
		        coordinate[c1,circle,midway,fill] (9)
		        coordinate[c1,circle,at end,fill,label=215:$C_5$] (10);

		\draw[black,line width=2pt] (0.center) -- (270:3)
		        coordinate[c1,at start] (0)
		        coordinate[c1,circle,midway,fill] (11)
		        coordinate[c1,circle,at end,fill,label=170:$C_6$] (12);

		\draw[black,line width=2pt] (0.center) -- (225:3)
		        coordinate[c1,at start] (0)
		        coordinate[c1,circle,midway,fill] (13)
		        coordinate[c1,circle,at end,fill,label=125:$C_7$] (14);

		\draw[black,line width=2pt] (0.center) -- (0:3)
		        coordinate[c1,fill,at start] (0)
		        coordinate[c1,circle,fill,midway,fill] (7)
		        coordinate[c1,circle,,fill,at end,label=260:$C_4$] (8);

		\coordinate[l7,label=180:$P_\psi$] (0) at (0.center);

		\coordinate[d1,black] (.) at (190:2);
		\coordinate[d1,black] (.) at (180:2);
		\coordinate[d1,black] (.) at (170:2);
		\end{tikzpicture}
		\fi
	\end{center}
	\caption{Greechie diagram showing an observable $P_\psi$ with $v(P_\psi)=1$ and the (infinite) set of compatible observables $P_\phi$ for which $v(P_\phi)=0$. This is the maximal extent of value definiteness for a system in state $\ket{\psi}$ -- no other one-dimensional projection observables on $\C^3$ can be value definite.}
	\label{fig:star}
\end{figure}

Rather, it is not Theorem~\ref{thm:main} that is state-dependent, but the proof we have given: to show that a given observable $P_\phi$ is value indefinite from the assumption that $v(P_\psi)=1$, we need a set $\mathcal{O}$ particular to this $\ket{\phi}$.
However, as we discussed in the preceding section, this is perfectly reasonable given the form of the theorem.

One can emphasise further the state-independence of Theorem~\ref{thm:main} by restating the theorem in the following form: \emph{``Only a single one-dimensional projection observable on the Hilbert space $\C^n$ for $n\ge 3$ can be assigned the value 1 by an admissible, noncontextual value assignment function''}.
In this form the state-independence is clear;
the illusion of state-dependence enters because of the connection, via the eigenstate assumption, between the ``one observable assigned the value 1'' and the particular state $\ket{\psi}$ (and corresponding observable $P_\psi$ with $v(P_\psi)=1$) which is necessary for the physical interpretation of the theorem.


The importance of the state-independence of the Kochen-Specker theorem arises, in part, in the use of Kochen-Specker sets of observables in testable inequalities.
It is important to note that, even though these inequalities are sometimes referred to as ``Kochen-Specker inequalities''~\cite{Larsson:2002aa}, they are better seen simply as noncontextuality inequalities.
These inequalities are derived under the assumption only of noncontextuality, ignoring the admissibility requirements, and bounds on quantities are calculated over all possible noncontextual value assignments.
A key result shows that one can derive such an inequality from any Kochen-Specker set~\cite{Yu:2014aa}.
It is clear that these value assignments cannot obey the admissibility requirements, since the Kochen-Specker theorem shows precisely that no classical value assignment can do so.

The strength of Theorem~\ref{thm:main}, on the other hand, relies precisely on the use of the admissibility requirements to determine when definite values should be assigned.
Hence, while one can use the methods of~\cite{Yu:2014aa} to derive inequalities from the constructions in the proof of Theorem~\ref{thm:main}, these bounds would be calculated over all noncontextual value assignments (subject to $v(P_\psi)=1$), without paying heed to admissibility, and hence would offer no conceptual advantage over existing inequalities.
Furthermore, since our construction in Lemma~\ref{lemma:ExplicitCase}, for example, contains 37 observables, these would pose no experimental benefit to existing, simpler inequalities either~\cite{Kirchmair:2009gr}.

Nonetheless, the state-independence of the result shows that  the value indefiniteness of almost all one-dimensional projection observables in quantum mechanics is indeed a deep feature of the theory -- of the logical structure of Hilbert space -- rather than a property of particular states.

\section{Conclusions}

In summary, we proved a variant of the Kochen-Specker theorem showing that the non-classicality implied by the Kochen-Specker theorem is, in a specific sense, maximal.
Specifically, under the assumptions that (1) any value definite observables behave noncontextually, and (2) contexts obey weak `admissibility' rules on any value definite observables they contain, we show that only one one-dimensional projection observable on the Hilbert space $\C^n$ (for $n\ge 3$) can be assigned consistently the definite value 1.

If a quantum system is prepared in a state $\ket{\psi}$, subject to the assumption that any value assignment function for the system must assign the value 1 to $P_\psi$ since $P_\psi\ket{\psi}=\ket{\psi}$, the theorem can be interpreted as showing that the measurement of any observable $P_\phi$ projecting onto the linear subspace spanned by a state $\ket{\phi}$ that is neither orthogonal nor co-aligned with $\ket{\psi}$ must be value indefinite -- that is, indeterministic.
This interpretation, which shows that almost all one-dimensional projection observables are value indefinite for a given system~\cite{Abbott:2013ly}, is stronger than what can be drawn from the Kochen-Specker theorem, which, in contrast, shows only that not all observables can be value definite.

This result justifies further the general belief that quantum mechanics is indeterministic -- that there is no hidden variable or definite value determining the outcome of a measurement in advance.
This eliminates the need to assume that the non-classicality shown by the Kochen-Specker theorem should apply uniformly, instead deriving this global value indefiniteness.
As with the Kochen-Specker theorem, this result relies on the assumption that classical values, should they exist, must behave noncontextually.

Finally, these results help theoretically certify quantum random number generators~\cite{Abbott:2012fk}, since the promises of such devices rely on the indeterministic nature of quantum measurements~\cite{Stefanov:2000aa}.
By localising value indefinite observables, one can be sure that the measurements producing the output bits do not yield any pre-existing element of physical reality.
We emphasise that these results do not hold for two-dimensional systems -- a class into which many current quantum random number generators unfortunately fall.

\begin{acknowledgments}
We thank the anonymous referees for suggestions which helped improve this paper.
This work was supported in part by Marie Curie FP7-PEOPLE-2010-IRSES Grant RANPHYS.
\end{acknowledgments}


\begin{thebibliography}{22}%
\makeatletter
\providecommand \@ifxundefined [1]{%
 \@ifx{#1\undefined}
}%
\providecommand \@ifnum [1]{%
 \ifnum #1\expandafter \@firstoftwo
 \else \expandafter \@secondoftwo
 \fi
}%
\providecommand \@ifx [1]{%
 \ifx #1\expandafter \@firstoftwo
 \else \expandafter \@secondoftwo
 \fi
}%
\providecommand \natexlab [1]{#1}%
\providecommand \enquote  [1]{``#1''}%
\providecommand \bibnamefont  [1]{#1}%
\providecommand \bibfnamefont [1]{#1}%
\providecommand \citenamefont [1]{#1}%
\providecommand \href@noop [0]{\@secondoftwo}%
\providecommand \href [0]{\begingroup \@sanitize@url \@href}%
\providecommand \@href[1]{\@@startlink{#1}\@@href}%
\providecommand \@@href[1]{\endgroup#1\@@endlink}%
\providecommand \@sanitize@url [0]{\catcode `\\12\catcode `\$12\catcode
  `\&12\catcode `\#12\catcode `\^12\catcode `\_12\catcode `\%12\relax}%
\providecommand \@@startlink[1]{}%
\providecommand \@@endlink[0]{}%
\providecommand \url  [0]{\begingroup\@sanitize@url \@url }%
\providecommand \@url [1]{\endgroup\@href {#1}{\urlprefix }}%
\providecommand \urlprefix  [0]{URL }%
\providecommand \Eprint [0]{\href }%
\providecommand \doibase [0]{http://dx.doi.org/}%
\providecommand \selectlanguage [0]{\@gobble}%
\providecommand \bibinfo  [0]{\@secondoftwo}%
\providecommand \bibfield  [0]{\@secondoftwo}%
\providecommand \translation [1]{[#1]}%
\providecommand \BibitemOpen [0]{}%
\providecommand \bibitemStop [0]{}%
\providecommand \bibitemNoStop [0]{.\EOS\space}%
\providecommand \EOS [0]{\spacefactor3000\relax}%
\providecommand \BibitemShut  [1]{\csname bibitem#1\endcsname}%
\let\auto@bib@innerbib\@empty
\bibitem [{\citenamefont {Bell}(1966)}]{Bell:1966uq}%
  \BibitemOpen
  \bibfield  {author} {\bibinfo {author} {\bibfnamefont {J.~S.}\ \bibnamefont
  {Bell}},\ }\href@noop {} {\bibfield  {journal} {\bibinfo  {journal} {Reviews
  of Modern Physics}\ }\textbf {\bibinfo {volume} {38}},\ \bibinfo {pages}
  {447} (\bibinfo {year} {1966})}\BibitemShut {NoStop}%
\bibitem [{\citenamefont {Kochen}\ and\ \citenamefont
  {Specker}(1967)}]{Kochen:1967fk}%
  \BibitemOpen
  \bibfield  {author} {\bibinfo {author} {\bibfnamefont {S.~B.}\ \bibnamefont
  {Kochen}}\ and\ \bibinfo {author} {\bibfnamefont {E.}~\bibnamefont
  {Specker}},\ }\href {\doibase 10.1512/iumj.1968.17.17004} {\bibfield
  {journal} {\bibinfo  {journal} {Journal of Mathematics and Mechanics (now
  Indiana University Mathematics Journal)}\ }\textbf {\bibinfo {volume} {17}},\
  \bibinfo {pages} {59} (\bibinfo {year} {1967})}\BibitemShut {NoStop}%
\bibitem [{\citenamefont {Mermin}(1993)}]{Mermin:1993qy}%
  \BibitemOpen
  \bibfield  {author} {\bibinfo {author} {\bibfnamefont {D.~N.}\ \bibnamefont
  {Mermin}},\ }\href {\doibase 10.1103/RevModPhys.65.803} {\bibfield  {journal}
  {\bibinfo  {journal} {Reviews of Modern Physics}\ }\textbf {\bibinfo {volume}
  {65}},\ \bibinfo {pages} {803} (\bibinfo {year} {1993})}\BibitemShut
  {NoStop}%
\bibitem [{\citenamefont {Pitowsky}(1998)}]{Pitowsky:1998aa}%
  \BibitemOpen
  \bibfield  {author} {\bibinfo {author} {\bibfnamefont {I.}~\bibnamefont
  {Pitowsky}},\ }\href@noop {} {\bibfield  {journal} {\bibinfo  {journal}
  {Journal of Mathematical Physics}\ }\textbf {\bibinfo {volume} {39}},\
  \bibinfo {pages} {218} (\bibinfo {year} {1998})}\BibitemShut {NoStop}%
\bibitem [{\citenamefont {Aspect}, \citenamefont {Grangier},\ and\
  \citenamefont {Roger}(1982)}]{Aspect:1982dp}%
  \BibitemOpen
  \bibfield  {author} {\bibinfo {author} {\bibfnamefont {A.}~\bibnamefont
  {Aspect}}, \bibinfo {author} {\bibfnamefont {P.}~\bibnamefont {Grangier}}, \
  and\ \bibinfo {author} {\bibfnamefont {G.}~\bibnamefont {Roger}},\ }\href
  {\doibase 10.1103/PhysRevLett.49.91} {\bibfield  {journal} {\bibinfo
  {journal} {Physical Review Letters}\ }\textbf {\bibinfo {volume} {49}},\
  \bibinfo {pages} {91} (\bibinfo {year} {1982})}\BibitemShut {NoStop}%
\bibitem [{\citenamefont {Cabello}, \citenamefont {Estebaranz},\ and\
  \citenamefont {Garc{\'{i}}a-Alcaine}(1996)}]{Cabello:1996zr}%
  \BibitemOpen
  \bibfield  {author} {\bibinfo {author} {\bibfnamefont {A.}~\bibnamefont
  {Cabello}}, \bibinfo {author} {\bibfnamefont {J.~M.}\ \bibnamefont
  {Estebaranz}}, \ and\ \bibinfo {author} {\bibfnamefont {G.}~\bibnamefont
  {Garc{\'{i}}a-Alcaine}},\ }\href@noop {} {\bibfield  {journal} {\bibinfo
  {journal} {Physics Letters A}\ }\textbf {\bibinfo {volume} {212}},\ \bibinfo
  {pages} {183} (\bibinfo {year} {1996})}\BibitemShut {NoStop}%
\bibitem [{\citenamefont {Cabello}(2008)}]{Cabello:2008hc}%
  \BibitemOpen
  \bibfield  {author} {\bibinfo {author} {\bibfnamefont {A.}~\bibnamefont
  {Cabello}},\ }\href {\doibase 10.1103/PhysRevLett.101.210401} {\bibfield
  {journal} {\bibinfo  {journal} {Physical Review Letters}\ }\textbf {\bibinfo
  {volume} {101}},\ \bibinfo {eid} {210401} (\bibinfo {year}
  {2008})}\BibitemShut {NoStop}%
\bibitem [{\citenamefont {Bohm}(1952)}]{Bohm:1952aa}%
  \BibitemOpen
  \bibfield  {author} {\bibinfo {author} {\bibfnamefont {D.}~\bibnamefont
  {Bohm}},\ }\href@noop {} {\bibfield  {journal} {\bibinfo  {journal} {Physical
  Review}\ }\textbf {\bibinfo {volume} {85}},\ \bibinfo {pages} {166} (\bibinfo
  {year} {1952})}\BibitemShut {NoStop}%
\bibitem [{\citenamefont {Cabello}(1994)}]{Cabello:1994ly}%
  \BibitemOpen
  \bibfield  {author} {\bibinfo {author} {\bibfnamefont {A.}~\bibnamefont
  {Cabello}},\ }\href@noop {} {\bibfield  {journal} {\bibinfo  {journal}
  {European Journal of Physics}\ }\textbf {\bibinfo {volume} {15}},\ \bibinfo
  {pages} {179} (\bibinfo {year} {1994})}\BibitemShut {NoStop}%
\bibitem [{\citenamefont {Peres}(1991)}]{Peres:1991ys}%
  \BibitemOpen
  \bibfield  {author} {\bibinfo {author} {\bibfnamefont {A.}~\bibnamefont
  {Peres}},\ }\href@noop {} {\bibfield  {journal} {\bibinfo  {journal} {Journal
  of Physics A: Mathematical and General}\ }\textbf {\bibinfo {volume} {24}},\
  \bibinfo {pages} {L175} (\bibinfo {year} {1991})}\BibitemShut {NoStop}%
\bibitem [{\citenamefont {Abbott}\ \emph {et~al.}(2012)\citenamefont {Abbott},
  \citenamefont {Calude}, \citenamefont {Conder},\ and\ \citenamefont
  {Svozil}}]{Abbott:2012fk}%
  \BibitemOpen
  \bibfield  {author} {\bibinfo {author} {\bibfnamefont {A.~A.}\ \bibnamefont
  {Abbott}}, \bibinfo {author} {\bibfnamefont {C.~S.}\ \bibnamefont {Calude}},
  \bibinfo {author} {\bibfnamefont {J.}~\bibnamefont {Conder}}, \ and\ \bibinfo
  {author} {\bibfnamefont {K.}~\bibnamefont {Svozil}},\ }\href {\doibase
  10.1103/PhysRevA.86.062109} {\bibfield  {journal} {\bibinfo  {journal}
  {Physical Review A}\ }\textbf {\bibinfo {volume} {86}},\ \bibinfo {pages}
  {062109} (\bibinfo {year} {2012})}\BibitemShut {NoStop}%
\bibitem [{\citenamefont {Abbott}, \citenamefont {Calude},\ and\ \citenamefont
  {Svozil}(2013)}]{Abbott:2013ly}%
  \BibitemOpen
  \bibfield  {author} {\bibinfo {author} {\bibfnamefont {A.~A.}\ \bibnamefont
  {Abbott}}, \bibinfo {author} {\bibfnamefont {C.~S.}\ \bibnamefont {Calude}},
  \ and\ \bibinfo {author} {\bibfnamefont {K.}~\bibnamefont {Svozil}},\ }\href
  {\doibase 10.1103/PhysRevA.89.032109} {\bibfield  {journal} {\bibinfo
  {journal} {Physical Review A}\ }\textbf {\bibinfo {volume} {89}},\ \bibinfo
  {pages} {032109} (\bibinfo {year} {2013})}\BibitemShut {NoStop}%
\bibitem [{\citenamefont {de~Ronde}(2011)}]{Ronde:2011aa}%
  \BibitemOpen
  \bibfield  {author} {\bibinfo {author} {\bibfnamefont {C.}~\bibnamefont
  {de~Ronde}},\ }\emph {\bibinfo {title} {The Contextual and Modal Character of
  Quantum Mechanics: {A} Formal and Philosophic Analysis in the Foundations of
  Physics}},\ \href@noop {} {Ph.D. thesis},\ \bibinfo  {school} {Utrecht
  University} (\bibinfo {year} {2011})\BibitemShut {NoStop}%
\bibitem [{\citenamefont {Halmos}(1974)}]{Halmos:1974qv}%
  \BibitemOpen
  \bibfield  {author} {\bibinfo {author} {\bibfnamefont {P.~R.}\ \bibnamefont
  {Halmos}},\ }\href@noop {} {\emph {\bibinfo {title} {Finite-dimensional
  Vector Spaces}}}\ (\bibinfo  {publisher} {Springer},\ \bibinfo {address} {New
  York, Heidelberg, Berlin},\ \bibinfo {year} {1974})\BibitemShut {NoStop}%
\bibitem [{\citenamefont {Gleason}(1957)}]{Gleason:1957ty}%
  \BibitemOpen
  \bibfield  {author} {\bibinfo {author} {\bibfnamefont {A.~M.}\ \bibnamefont
  {Gleason}},\ }\href {\doibase 10.1512/iumj.1957.6.56050} {\bibfield
  {journal} {\bibinfo  {journal} {Journal of Mathematics and Mechanics (now
  Indiana University Mathematics Journal)}\ }\textbf {\bibinfo {volume} {6}},\
  \bibinfo {pages} {885} (\bibinfo {year} {1957})}\BibitemShut {NoStop}%
\bibitem [{\citenamefont {Fiorentino}\ \emph {et~al.}(2007)\citenamefont
  {Fiorentino}, \citenamefont {Santori}, \citenamefont {Spillane},
  \citenamefont {Beausoleil},\ and\ \citenamefont {Munro}}]{Fiorentino:2007dq}%
  \BibitemOpen
  \bibfield  {author} {\bibinfo {author} {\bibfnamefont {M.}~\bibnamefont
  {Fiorentino}}, \bibinfo {author} {\bibfnamefont {C.}~\bibnamefont {Santori}},
  \bibinfo {author} {\bibfnamefont {S.~M.}\ \bibnamefont {Spillane}}, \bibinfo
  {author} {\bibfnamefont {R.~G.}\ \bibnamefont {Beausoleil}}, \ and\ \bibinfo
  {author} {\bibfnamefont {W.~J.}\ \bibnamefont {Munro}},\ }\href {\doibase
  10.1103/PhysRevA.75.032334} {\bibfield  {journal} {\bibinfo  {journal}
  {Physical Review A}\ }\textbf {\bibinfo {volume} {75}},\ \bibinfo {eid}
  {032334} (\bibinfo {year} {2007})}\BibitemShut {NoStop}%
\bibitem [{\citenamefont {Su{\'{a}}rez}(2004)}]{Suarez:2004gn}%
  \BibitemOpen
  \bibfield  {author} {\bibinfo {author} {\bibfnamefont {M.}~\bibnamefont
  {Su{\'{a}}rez}},\ }\href@noop {} {\bibfield  {journal} {\bibinfo  {journal}
  {The British Journal for the Philosophy of Science}\ }\textbf {\bibinfo
  {volume} {55}},\ \bibinfo {pages} {219} (\bibinfo {year} {2004})}\BibitemShut
  {NoStop}%
\bibitem [{\citenamefont {Kirchmair}\ \emph {et~al.}(2009)\citenamefont
  {Kirchmair}, \citenamefont {Z{\"{a}}hringer}, \citenamefont {Gerritsma},
  \citenamefont {Kleinmann}, \citenamefont {G{\"{u}}hne}, \citenamefont
  {Cabello}, \citenamefont {Blatt},\ and\ \citenamefont
  {Roos}}]{Kirchmair:2009gr}%
  \BibitemOpen
  \bibfield  {author} {\bibinfo {author} {\bibfnamefont {G.}~\bibnamefont
  {Kirchmair}}, \bibinfo {author} {\bibfnamefont {F.}~\bibnamefont
  {Z{\"{a}}hringer}}, \bibinfo {author} {\bibfnamefont {R.}~\bibnamefont
  {Gerritsma}}, \bibinfo {author} {\bibfnamefont {M.}~\bibnamefont
  {Kleinmann}}, \bibinfo {author} {\bibfnamefont {O.}~\bibnamefont
  {G{\"{u}}hne}}, \bibinfo {author} {\bibfnamefont {A.}~\bibnamefont
  {Cabello}}, \bibinfo {author} {\bibfnamefont {R.}~\bibnamefont {Blatt}}, \
  and\ \bibinfo {author} {\bibfnamefont {C.~F.}\ \bibnamefont {Roos}},\
  }\href@noop {} {\bibfield  {journal} {\bibinfo  {journal} {Nature}\ }\textbf
  {\bibinfo {volume} {460}},\ \bibinfo {pages} {494} (\bibinfo {year}
  {2009})}\BibitemShut {NoStop}%
\bibitem [{\citenamefont {Zu}\ \emph {et~al.}(2012)\citenamefont {Zu},
  \citenamefont {Wang}, \citenamefont {Deng}, \citenamefont {Chang},
  \citenamefont {Liu}, \citenamefont {Hou}, \citenamefont {Yang},\ and\
  \citenamefont {Duan}}]{Zu:2012tj}%
  \BibitemOpen
  \bibfield  {author} {\bibinfo {author} {\bibfnamefont {C.}~\bibnamefont
  {Zu}}, \bibinfo {author} {\bibfnamefont {Y.~X.}\ \bibnamefont {Wang}},
  \bibinfo {author} {\bibfnamefont {D.~L.}\ \bibnamefont {Deng}}, \bibinfo
  {author} {\bibfnamefont {X.~Y.}\ \bibnamefont {Chang}}, \bibinfo {author}
  {\bibfnamefont {K.}~\bibnamefont {Liu}}, \bibinfo {author} {\bibfnamefont
  {P.~Y.}\ \bibnamefont {Hou}}, \bibinfo {author} {\bibfnamefont {H.~X.}\
  \bibnamefont {Yang}}, \ and\ \bibinfo {author} {\bibfnamefont {L.~M.}\
  \bibnamefont {Duan}},\ }\href {\doibase PhysRevLett.109.150401} {\bibfield
  {journal} {\bibinfo  {journal} {Physical Review Letters}\ }\textbf {\bibinfo
  {volume} {109}},\ \bibinfo {pages} {150401} (\bibinfo {year}
  {2012})}\BibitemShut {NoStop}%
\bibitem [{\citenamefont {Larsson}(2002)}]{Larsson:2002aa}%
  \BibitemOpen
  \bibfield  {author} {\bibinfo {author} {\bibfnamefont {J.-A.}\ \bibnamefont
  {Larsson}},\ }\href@noop {} {\bibfield  {journal} {\bibinfo  {journal}
  {Europhysics Letters}\ }\textbf {\bibinfo {volume} {58}},\ \bibinfo {pages}
  {799} (\bibinfo {year} {2002})}\BibitemShut {NoStop}%
\bibitem [{\citenamefont {Yu}\ and\ \citenamefont {Tong}(2014)}]{Yu:2014aa}%
  \BibitemOpen
  \bibfield  {author} {\bibinfo {author} {\bibfnamefont {X.-D.}\ \bibnamefont
  {Yu}}\ and\ \bibinfo {author} {\bibfnamefont {D.~M.}\ \bibnamefont {Tong}},\
  }\href@noop {} {\bibfield  {journal} {\bibinfo  {journal} {Physical Review
  A}\ }\textbf {\bibinfo {volume} {89}},\ \bibinfo {pages} {010101} (\bibinfo
  {year} {2014})}\BibitemShut {NoStop}%
\bibitem [{\citenamefont {Stefanov}\ \emph {et~al.}(2000)\citenamefont
  {Stefanov}, \citenamefont {Gisin}, \citenamefont {Guinnard}, \citenamefont
  {Guinnard},\ and\ \citenamefont {Zbinden}}]{Stefanov:2000aa}%
  \BibitemOpen
  \bibfield  {author} {\bibinfo {author} {\bibfnamefont {A.}~\bibnamefont
  {Stefanov}}, \bibinfo {author} {\bibfnamefont {N.}~\bibnamefont {Gisin}},
  \bibinfo {author} {\bibfnamefont {O.}~\bibnamefont {Guinnard}}, \bibinfo
  {author} {\bibfnamefont {L.}~\bibnamefont {Guinnard}}, \ and\ \bibinfo
  {author} {\bibfnamefont {H.}~\bibnamefont {Zbinden}},\ }\href@noop {}
  {\bibfield  {journal} {\bibinfo  {journal} {Journal of Modern Optics}\
  }\textbf {\bibinfo {volume} {47}},\ \bibinfo {pages} {595} (\bibinfo {year}
  {2000})}\BibitemShut {NoStop}%
\end{thebibliography}

%

\end{document}